\newcommand{\bc}{\begin{center}}
\newcommand{\ec}{\end{center}}
\newcommand{\be}{\begin{enumerate}}
\newcommand{\ee}{\end{enumerate}}
\newcommand{\bi}{\begin{itemize}}
\newcommand{\ei}{\end{itemize}}
\newcommand{\beq}{\begin{equation}}
\newcommand{\eeq}{\end{equation}}
\newcommand{\beqs}{\begin{equation*}}
\newcommand{\eeqs}{\end{equation*}}
\newcommand{\ba}{\begin{array}}
\newcommand{\ea}{\end{array}}
\newtheorem{theorem}{Theorem}
\newtheorem{proposition}{Proposition}
\newtheorem{defn}{Definition}
\newtheorem{lem}{Lemma}
\newtheorem{ejem}{Example}
\newcommand{\pt}[1]{\langle #1 \rangle}
\newcommand{\ol}[1]{\overline{#1}}
\newcommand{\C}{\sat}
\newcommand{\sat}{\mathsf{SAT}}
\newcommand{\cl}[1]{\mathsf{cl}(#1)}
\newcommand{\id}{\mathsf{Id}}
\newcommand{\efe}{\mathsf{F}}
\renewcommand{\L}{\mathcal{L}}
\newcommand{\Pe}{\mathcal{P}}
\newcommand{\R}{\mathcal{R}}
\newcommand{\I}{\mathcal{I}}
\newcommand{\F}{\mathcal{F}}
\newcommand{\K}{\mathcal{K}}
\newcommand{\M}{\mathcal{M}}
\newcommand{\Nc}{\mathcal{N}}
\newcommand{\E}{\mathcal{E}}
\newcommand{\tA}{\mathit{A}}
\newcommand{\G}{\Gamma}
\newcommand{\Eb}{\mathbb{E}}
\newcommand{\N}{\mathbb{N}}
\newcommand{\St}{\mathbb{S}}
\newcommand{\Lp}{\mathop{\sf fst}}
\newcommand{\Rp}{\mathop{\sf snd}}
\newcommand{\inl}{\mathop{\sf inl}}
\newcommand{\inr}{\mathop{\sf inr}}
\newcommand{\inm}{\mathop{\mathsf{in}}}
\newcommand{\out}{\mathop{\mathsf{out}}}
\newcommand{\pack}{\mathop{\mathsf{pack}}}
\newcommand{\open}{\mathsf{open}}
\newcommand{\pred}{\mathop{\mathsf{pred}}}
\newcommand{\fac}{\mathop{{\sf fac}}}
\newcommand{\head}{\mathop{\mathsf{head}}}
\newcommand{\tail}{\mathop{\mathsf{tail}}}
\newcommand{\from}{\mathop{\mathsf{from}}}
\newcommand{\cons}{\mathop{\mathsf{cons}}}
\newcommand{\map}{\mathsf{map}}
\newcommand{\maph}{\mathsf{maphd}}
\newcommand{\fa}{\forall}
\newcommand{\ex}{\exists}
\newcommand{\lb}{\lambda}
\newcommand{\Lb}{\Lambda} 
\newcommand{\imp}{\rightarrow}
\newcommand{\Imp}{\Rightarrow}
\newcommand{\Iff}{\Leftrightarrow}
\newcommand{\inc}{\subseteq}
\newcommand{\vacio}{\varnothing}
\newcommand{\sn}{\mathsf{SN}}
\newcommand{\case}{\mathsf{case}}
\newcommand{\eqdef}{=_{def}} 
\newcommand{\lfp}{\mathop{\mathsf{lfp}}}
\newcommand{\gfp}{\mathop{\mathsf{gfp}}}
\newcommand{\afd}{\mathsf{AF2}}
\newcommand{\afdmn}{\mathsf{AF2}^{M\mu\nu}}
\newcommand{\itm}{\mathop{\mathsf{MIt}}}
\newcommand{\coitm}{\mathop{\mathsf{MCoIt}}}
\newcommand{\recm}{\mathop{\mathsf{MRec}}}
\newcommand{\corecm}{\mathop{\mathsf{MCoRec}}}
\newcommand{\pif}{\Phi}
\newcommand{\El}[1]{E\big[#1\big]}
\newcommand{\ist}{\mathop{{\sf ist}}}
\newcommand{\sist}{\mathop{{\sf prt}}}
\newcommand{\pg}{\Phi}
\newcommand{\thii}{\Theta_I}
\newcommand{\thie}{\Theta_E}
\newcommand{\thec}{\Theta^\subseteq_E}
\newcommand{\ism}{\I_\sigma}
\newcommand{\suc}{\mathop{\mathsf{suc}}}
\newcommand{\psup}{\Phi^{\sqsupseteq}}
\newcommand{\psub}{\Phi^{\sqsubseteq}}
\newcommand{\mq}{\sqsubseteq}
\newcommand{\qm}{\sqsupseteq}
\newcommand{\cf}{\mathsf{c}}
\newcommand{\df}{\mathsf{d}}
\newcommand{\fsf}{\mathsf{f}}
\newcommand{\Jk}{\mathfrak{J}}
\newcommand{\lf}{\mathop{\mathsf{lf}}}
\newcommand{\rg}{\mathop{\mathsf{rg}}}
\newcommand{\unp}{\langle\!\!\!\!\langle\,\rangle\!\!\!\!\rangle}
\newcommand{\suma}{\mathop{{\sf sum}}}
\newcommand{\Mf}{\mathfrak{M}}
\title{Mendler-style Iso-(Co)inductive predicates: a strongly normalizing approach
\footnote{This research is being supported by PAPIIT-UNAM projects IN117711
  and IN108810.}}
\author{Favio Ezequiel Miranda-Perea \and Lourdes del Carmen Gonz\'alez-Huesca
\institute{ Departamento de Matem\'aticas, Facultad de Ciencias UNAM
\\ Circuito Exterior S/N, Cd. Universitaria, 04510, M\'exico D.F., M\'exico}
\\ \email{favio@ciencias.unam.mx \quad\quad\quad\quad luglzhuesca@ciencias.unam.mx}
}
\begin{document}
\maketitle
\begin{abstract}
We present an extension of the second-order logic $\afd$ with iso-style inductive and
coinductive definitions specifically designed
to extract programs from proofs \`a la Krivine-Parigot by means of primitive
(co)recursion principles. Our logic includes primitive constructors of least and greatest
fixed points of predicate transformers, but contrary to the common approach, we do
not restrict ourselves to positive operators to ensure monotonicity, instead
we use the
Mendler-style, motivated here by the concept of monotonization of an
arbitrary operator on a complete lattice. We prove an adequacy
theorem with respect to a realizability semantics based on $\sat$ (saturated) sets and $\sat$-valued
functions and as a consequence we obtain the strong normalization property for
the proof-term reduction, an important feature which is absent in previous related work. \\ 
{\bf Keywords:} Mendler-style, (co)inductive definitions, primitive (co)recursion, strong
normalization, saturated set, monotonization, second-order logic, programming
with proofs.
\end{abstract}

\section{Introduction}

The system $\afd$ for second-order intuitionistic logic introduced by Leivant and Krivine \cite{le83,kri93}, is one 
of the most fruitful systems obtained by the Curry-Howard correspondence. It
types exactly the same terms as the system $\efe$ of Girard and Reynolds and shares with it the properties of
strong normalization and subject reduction. Its main improvement with respect to system $\efe$ 
is that it allows the extraction of programs via the programming-with-proofs
paradigm of Krivine and Parigot. This method, originally developed in
\cite{kp90} (see also \cite{le83}) ensures the correctness of programs
($\lambda$-terms) extracted from proofs of termination statements of functions
involving formal data types, that is, from proofs of totality. Well known
results ensure the extraction of programs for all functions whose termination
is provable in second order Peano arithmetic. Nevertheless this result, satisfactory
from the extensional point of view does not suffice for an intensional view
concerning programs. In $\afd$ we can get programs for all needed functions,
but these do not have necessarily the intended behavior, see \cite{par89}. To
solve this problem some extensions of $\afd$ with least fixed points ({\sf
  TTR} \cite{par92}) and  also with greatest fixed points ($\afd^{\mu\nu}$
\cite{raf93}) have been introduced. These features allow for the (co)inductive
definition of predicates and are suitable for programming with proofs. However the strong normalization is lost
due to the use of a fixed-point combinator in the proof-term system, which
encodes derivations with lambda terms. The situation is that an iterative
function $f$ can be defined within $\afd$ and therefore its extracted program
$\bar{f}$ is automatically terminating, but the extracted program for a primitive recursive function 
employs a fixed-point combinator in the extensions of $\afd$ and
therefore its termination is not obvious at all. This has lead to sophisticated methods to verify that these programs
indeed terminate \cite{ms94}, even when they fit into a well-known terminating recursion
pattern captured in G\"odel's T for the case of natural numbers and
generalized to all (co)inductive types in \cite{mp09,mat98,men87}, for example.  
The main contribution of this paper is the introduction of a new
extension of $\afd$ with primitive (co)recursion over least and greatest
fixed points, called $\afdmn$, that enjoys the strong normalization property. Instead
of using a fixed-point combinator we use the Mendler-style approach of
\cite{men87} but with two important differences: we use a natural
deduction approach, and we do not restrict ourselves to positive
operators. This shows that such syntactical restriction is irrelevant
to the strong normalization proof of the whole Mendler-system, a feature first discovered by
Matthes (\cite{mat98}, p.83) for the inductive fragment. Another contribution
of our work is the use of the iso-style, meaning that a (co)inductive
predicate and its folding/unfolding are not considered equal but
isomorphic. It is important to mention that previous extensions of $\afd$ with
(co)inductive definitions deal only with equi-style predicates, but in our opinion
the use of the iso-style is closer to the usual mechanisms of data type definition in functional
programming languages. 
As a consequence of our definition of saturated sets, the proof of the adequacy
theorem of our logic does not employ ordinal recursion. Moreover, the rules of our logic are specifically designed
to derive statements of totality of functions involving (co)inductive
predicates, that is, formulas of the form $\fa x.\Pe(x)\to\R(f(x))$.
The paper is organized as follows: in section \ref{sc:fp} we review
the required concepts of fixed-point theory needed to motivate the
definition of our logic, which is given in section \ref{sc:log}, together
with some examples of its expressivity. Section \ref{sec:sat} develops the constructions on
saturated sets employed in section \ref{sec:sem} to define an intuitionistic semantics of
the logic. Finally, we discuss related work in section \ref{sec:rw} and provide some closing remarks in section \ref{sec:concl}.

\section{Fixed-point theory}
\label{sc:fp}

In this section we recall some tools of fixed-point theory involving a
complete lattice $\pt{\L,\mq,\bigsqcap}$, where $\bigsqcap$ is the infimum
operator.  Given a monotone operator
$\Phi:\L\to\L$ the Knaster-Tarski theorem guarantees the existence of the
least (greatest) fixed-point of $\Phi$, denoted  $\lfp(\Phi)$ or $\gfp(\Phi)$, respectively. 

\begin{proposition}[Conventional (co)induction principles]\label{cor:prind}
Let $\Phi:\L\to\L$ be a monotone operator on a complete lattice
$\pt{\L,\mq,\bigsqcap}$. The following holds for every $M\in\L$.
\begin{itemize}
\item Induction: if $\Phi(M)\sqsubseteq M$ then $\lfp(\Phi)\sqsubseteq M$.
\item Extended induction: if $\Phi\big(\lfp(\Phi)\bigsqcap M\big)\sqsubseteq M$ then
  $\lfp(\Phi)\sqsubseteq M$.
\item Coinduction: if $M\sqsubseteq\Phi(M)$ then $M\sqsubseteq\gfp(\Phi)$.
\item Extended coinduction$\;$\footnote{Recall that in a complete lattice the
    supremum operator $\bigsqcup$ can be defined from the infimum operator
    $\bigsqcap$.}: 
if $M\sqsubseteq\Phi\big(\gfp(\Phi)\bigsqcup M\big)$ then
  $M\sqsubseteq\gfp(\Phi)$.
\end{itemize}
\end{proposition}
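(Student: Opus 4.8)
The plan is to derive all four principles from the Knaster--Tarski characterisations of the two fixed points: $\lfp(\Phi)$ is the infimum of the set of pre-fixed points $\{X\in\L : \Phi(X)\sqsubseteq X\}$ and is itself a fixed point, while dually $\gfp(\Phi)$ is the supremum of the post-fixed points $\{X\in\L : X\sqsubseteq\Phi(X)\}$ and is itself a fixed point. The two plain principles drop out immediately from these descriptions, and I would then reduce the two extended principles to the plain ones by feeding in a well-chosen auxiliary element.

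For plain induction I would argue that the hypothesis $\Phi(M)\sqsubseteq M$ makes $M$ a pre-fixed point, so it lies above the infimum of all such, giving $\lfp(\Phi)\sqsubseteq M$; plain coinduction is the exact dual, since $M\sqsubseteq\Phi(M)$ exhibits $M$ as a post-fixed point, hence below the supremum $\gfp(\Phi)$.

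For extended induction I would write $L=\lfp(\Phi)$, set $N=L\sqcap M$, and show that $N$ is a pre-fixed point. On one side $\Phi(N)\sqsubseteq M$ is exactly the hypothesis $\Phi(L\sqcap M)\sqsubseteq M$; on the other side $N\sqsubseteq L$ together with monotonicity and the fixed-point equation $\Phi(L)=L$ gives $\Phi(N)\sqsubseteq\Phi(L)=L$. These two bounds combine to $\Phi(N)\sqsubseteq L\sqcap M=N$, so plain induction yields $L\sqsubseteq N\sqsubseteq M$. Extended coinduction I would treat dually, taking $G=\gfp(\Phi)$ and $P=G\sqcup M$ and checking that $P$ is a post-fixed point: the bound $G\sqsubseteq\Phi(P)$ comes from monotonicity and $\Phi(G)=G$, while $M\sqsubseteq\Phi(P)$ is the hypothesis, so plain coinduction gives $M\sqsubseteq P\sqsubseteq G$.

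I do not expect a serious obstacle. The only step that genuinely uses monotonicity of $\Phi$, rather than merely the lattice structure, is $\Phi(N)\sqsubseteq\Phi(L)$ together with its dual, and the entire idea is the observation that meeting $M$ with $\lfp(\Phi)$, respectively joining it with $\gfp(\Phi)$, converts the weaker extended hypothesis into a genuine pre- (respectively post-) fixed-point inequality to which the plain principle applies.
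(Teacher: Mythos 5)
Your proof is correct and is exactly the standard argument the paper leaves implicit when it writes only ``Straightforward'': the plain principles from the Knaster--Tarski description of $\lfp(\Phi)$ and $\gfp(\Phi)$ as the least pre-fixed point and greatest post-fixed point, and the extended principles by verifying that $\lfp(\Phi)\sqcap M$ is a pre-fixed point and $\gfp(\Phi)\sqcup M$ is a post-fixed point. Both reductions check out, and you correctly isolate the fixed-point equation plus monotonicity as the only non-lattice-theoretic ingredients.
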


\begin{proof} Straightforward.
\end{proof}

The following concepts of monotonization of an arbitrary operator are taken from
\cite{mat98}.
\begin{defn}\label{def:mtz} Given an arbitrary operator $\Phi:\L\to\L$, we define its upper
  monotonization $\psup:\L\to\L$ and its lower monotonization $\psub:\L\to\L$
  as $\;\psup(M) = \bigsqcup \{\Phi(X)\;|\;X\sqsubseteq M\}\;$ and $\;\psub(M) = \bigsqcap \{\Phi(X)\;|\;M\sqsubseteq X\}$.
\end{defn}
\vspace{5pt}
The properties and relationships between $\Phi$ and its monotonizations are given in the following
\begin{proposition}\label{psubpsup:prop}  
If $\Phi:\L\to\L$ is an arbitrary operator then $\psub$ and $\psup$ are
monotone. Moreover,
\bi
\item For any $M\in\L$, $\;\psub(M)\sqsubseteq \Phi(M)\sqsubseteq \psup(M)$.
\item If $\Phi$ is monotone then $\;\psub=\Phi=\psup$ and if $\psub=\Phi\;$ or $\;\Phi=\psup$ then $\Phi$ is monotone. 
\item $\Phi (\lfp(\psup))\sqsubseteq \lfp(\psup)\;$ and $\; \gfp(\psub)\sqsubseteq \Phi(\gfp(\psub))$.
\ei  
\end{proposition}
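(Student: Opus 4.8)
The plan is to prove the three bullets in the order stated, reusing each item to shorten the next; no ingredient beyond the defining properties of suprema and infima in a complete lattice is required. First I would establish the monotonicity of $\psub$ and $\psup$, since that is precisely what makes $\lfp(\psup)$ and $\gfp(\psub)$ well defined (via Knaster--Tarski) in the last bullet.

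For monotonicity, the key observation is that the index families of Definition~\ref{def:mtz} vary monotonically in $M$: if $M\mq N$ then $\{X\mid X\mq M\}\inc\{X\mid X\mq N\}$, whereas, dually, $\{X\mid N\mq X\}\inc\{X\mid M\mq X\}$. Applying $\Phi$ pointwise preserves these inclusions, and since $\bigsqcup$ is monotone and $\bigsqcap$ is antitone with respect to inclusion of families, one reads off $\psup(M)\mq\psup(N)$ and $\psub(M)\mq\psub(N)$ directly. For the sandwich $\psub(M)\mq\Phi(M)\mq\psup(M)$ I would note that $M\mq M$ places $\Phi(M)$ inside both families $\{\Phi(X)\mid X\mq M\}$ and $\{\Phi(X)\mid M\mq X\}$; then $\Phi(M)\mq\psup(M)$ holds because $\psup(M)$ is an upper bound of the first family, and $\psub(M)\mq\Phi(M)$ holds because $\psub(M)$ is a lower bound of the second.

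The second bullet then splits cleanly. If $\Phi$ is monotone, to upgrade the sandwich to equalities I would show that $\Phi(M)$ is not merely a member but the extremal member of each family: for every $X\mq M$ monotonicity gives $\Phi(X)\mq\Phi(M)$, so $\Phi(M)$ is already an upper bound of $\{\Phi(X)\mid X\mq M\}$ and hence $\psup(M)\mq\Phi(M)$; dually, $M\mq X$ yields $\Phi(M)\mq\Phi(X)$, whence $\Phi(M)\mq\psub(M)$. The converse is immediate, since if $\Phi=\psub$ or $\Phi=\psup$ then $\Phi$ inherits monotonicity from the first item. Finally, for the pre/post-fixed-point statements I would invoke Knaster--Tarski on the monotone $\psup$ and $\psub$ to get the fixed-point equations $\psup(\lfp(\psup))=\lfp(\psup)$ and $\psub(\gfp(\psub))=\gfp(\psub)$, then instantiate the sandwich at $M=\lfp(\psup)$ and at $M=\gfp(\psub)$: this gives $\Phi(\lfp(\psup))\mq\psup(\lfp(\psup))=\lfp(\psup)$ and $\gfp(\psub)=\psub(\gfp(\psub))\mq\Phi(\gfp(\psub))$, as required.

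The argument is essentially routine; the only place demanding care is the direction of the inequalities governed by $\bigsqcap$, where a larger index family yields a \emph{smaller} infimum. One must therefore check that the monotonicity of $\psub$ and the lower bound $\psub(M)\mq\Phi(M)$ come out with the correct orientation, which is the mirror image of the more intuitive reasoning for $\psup$ and $\bigsqcup$.
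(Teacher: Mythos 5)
Your proof is correct and is precisely the straightforward argument the paper intends (its own proof is given only as ``Straightforward''): monotonicity from the inclusion of index families, the sandwich from $M\sqsubseteq M$, the equalities from $\Phi(M)$ being the extremal bound under monotonicity, and the (pre/post-)fixed-point claims from Knaster--Tarski plus the sandwich. You also correctly handle the one delicate point, namely the reversal of orientation for $\bigsqcap$.
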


\begin{proof} Straightforward.
\end{proof}

Next we justify the Mendler-style (co)induction principles by means of the
monotonizations. This justification is not present in the original
work of Mendler (\cite{men87}). However, the inductive part is discussed in \cite{mat98}. 

\begin{proposition}[Mendler (Co)induction principles]\label{pr:mprin}
$\!\!\!$The following holds for any $\Phi:\L\!\!\to\!\!\L$ and $M\!\!\in\!\L$.
\bi
\item Induction: 
if  $\;\fa X\big(X\mq M\imp \Phi(X)\mq M\big)$ then $\lfp(\psup)\mq M$.
\item Extended Induction: 
if  $\;\fa X\big(X\mq\lfp(\psup)\to X\mq M\imp \Phi(X)\mq M\big)$ then $\lfp(\psup)\mq M$.
\item Coinduction:
if $\;\fa
  X\big(M\mq X\imp M\mq \Phi(X)\big)$ then $\;M\mq\gfp(\psub)$.
\item Extended Coinduction:
if $\;\fa X\big(\gfp(\psub)\mq X\imp M\mq X\imp M\mq \Phi(X)\big)$ then $\;M\mq\gfp(\psub)$.
\ei
\end{proposition}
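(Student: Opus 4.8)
The plan is to reduce each of the four Mendler-style principles to the corresponding conventional principle of Proposition~\ref{cor:prind}, applied not to $\Phi$ itself but to its monotonization. This is legitimate because $\psup$ and $\psub$ are monotone by Proposition~\ref{psubpsup:prop}, so the Knaster--Tarski fixed points $\lfp(\psup)$ and $\gfp(\psub)$ and all of Proposition~\ref{cor:prind} are available for them. The whole argument then rests on one elementary fact about the complete lattice $\pt{\L,\mq,\bigsqcap}$: a supremum satisfies $\bigsqcup S\mq M$ precisely when every element of $S$ lies below $M$, and dually $M\mq\bigsqcap S$ precisely when $M$ lies below every element of $S$. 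Since Definition~\ref{def:mtz} builds $\psup$ and $\psub$ exactly as such suprema and infima, this observation converts each Mendler hypothesis directly into the premise of a conventional principle.

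For the \emph{Induction} clause I would apply the conventional induction of Proposition~\ref{cor:prind} to $\psup$, so that it suffices to prove $\psup(M)\mq M$. Unfolding Definition~\ref{def:mtz} gives $\psup(M)=\bigsqcup\{\Phi(X)\mid X\mq M\}$, and by the supremum fact this is $\mq M$ iff $\Phi(X)\mq M$ for every $X\mq M$, which is exactly the assumption $\fa X\big(X\mq M\imp\Phi(X)\mq M\big)$. The \emph{Coinduction} clause is strictly dual: conventional coinduction for $\psub$ reduces the goal to $M\mq\psub(M)$, and since $\psub(M)=\bigsqcap\{\Phi(X)\mid M\mq X\}$, the infimum fact identifies this with the hypothesis $\fa X\big(M\mq X\imp M\mq\Phi(X)\big)$.

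The extended clauses follow the same template, the only extra ingredient being the universal property of the meet and the join. For \emph{Extended Induction} I would invoke conventional extended induction for $\psup$, reducing the goal to $\psup\big(\lfp(\psup)\sqcap M\big)\mq M$. Unfolding the monotonization, this supremum is $\mq M$ iff $\Phi(X)\mq M$ for every $X\mq\lfp(\psup)\sqcap M$; and $X\mq\lfp(\psup)\sqcap M$ holds iff both $X\mq\lfp(\psup)$ and $X\mq M$, so the requirement reads precisely as $\fa X\big(X\mq\lfp(\psup)\to X\mq M\imp\Phi(X)\mq M\big)$. \emph{Extended Coinduction} is again the dual: conventional extended coinduction for $\psub$ reduces the goal to $M\mq\psub\big(\gfp(\psub)\sqcup M\big)$, and unfolding $\psub$ together with the fact that $\gfp(\psub)\sqcup M\mq X$ holds iff both $\gfp(\psub)\mq X$ and $M\mq X$ yields exactly the stated hypothesis.

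I expect no genuine obstacle, consistently with the ``Straightforward'' proofs of the preceding propositions; the content is really the bookkeeping of matching each monotonization against the right conventional premise. The one point deserving care is keeping the two-sided nature of the extended hypotheses straight: the nested implications of the form $X\mq\lfp(\psup)\to X\mq M\imp(\cdots)$ must be read as a single family of side-conditions on $X$, and this is exactly the index set defining the monotonization evaluated at the meet $\lfp(\psup)\sqcap M$ (respectively the join $\gfp(\psub)\sqcup M$ in the dual). Once that correspondence is pinned down, all four cases close immediately.
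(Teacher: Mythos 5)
Your proposal is correct and follows exactly the paper's route: the authors likewise derive all four clauses by applying the conventional (co)induction principles of Proposition~\ref{cor:prind} to the monotone operators $\psup$ and $\psub$, with the premise of each conventional principle matched to the Mendler hypothesis by unfolding Definition~\ref{def:mtz} via the universal properties of $\bigsqcup$ and $\bigsqcap$. Your write-up simply spells out the bookkeeping that the paper defers to its extended version.
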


\begin{proof} The conventional (co)induction principles for $\psub$ and $\psup$
  yield the required principles. For details see \cite{mg11}.
\end{proof}

\section{The Logic $\afdmn$}\label{sc:log}
We present now the logic $\afdmn$, which is an extension of $\afd$ with Mendler-style
(co)inductive definitions.

\bi
\item {\em Terms}: the object terms are defined as usual from a signature
  $\Sigma$ including  function symbols $f$ of a given arity. 
\beqs
t::= x\;|\;f(t_1,\ldots,t_n)
\eeqs
\item {\em Predicates}: 
apart from the usual predicates (second-order variables or predicate symbols of a
signature $\Sigma$) 
we have comprehension predicates, inductive predicate $\mu(\Phi)$ and coinductive
predicates $\nu(\Phi)$.  
\beqs
\Pe ::= X\;|\;P\;|\;\F\;|\;\mu(\Phi)\;|\;\nu(\Phi)
\eeqs
here $\F$ is a comprehension predicate of the form $\F\eqdef\lb\vec{x}.A$, 
where $A$ is a formula and its arity is the length of the vector of variables
$\vec{x}$, this predicate intends to represent the set
$\{\vec{t}\;|\;A[\vec{x}:=\vec{t}\,]\}$. On the other hand, $\Phi$ is an arbitrary predicate
transformer, which is a {\em closed} expression of the form $\Phi\eqdef\lb X.\Pe$, 
depending on a second-order variable $X$. Observe that we do not
require any syntactic restriction, like positivity, on the occurrences of $X$
in $\Pe$.
\item {\em Formulas}: these are defined as usual
\beqs
A,B ::= \Pe(t_1,\ldots,t_n)\;|\;A\to B\;|\;
\fa xA\;|\;\fa XA
\eeqs
\item {\em On equations}: term equations are formulas which play an important
  role in the logic and are defined as usual in second-order logic: the equation
  $r=s$ stands for the formula $\fa X.X(r)\imp X(s)$. 
\ei

The judgments of the logic are of the form $\G\vdash_\Eb t:A$ where 
$\G=\{x_1:A_1,\ldots,x_n:A_n\}$ is a context of formulas
annotated by {\em proof-term} variables, $\Eb=\{r_1=s_1,\ldots,r_n=s_n\}$
is a context of equations, $A$ is a formula and $t$ is a {\em proof-term}, 
which is a lambda term not to be confused with an object term, for even when we use the same
meta-variables for both, object and proof-terms, we consider them to be two
completely separated syntactic categories.  The derivation relation is
inductively defined by means of the following inference rules, where
$A[x:=r]$ ($A[X:=\Pe]$) always denotes capture-avoiding substitution of
first-order (second-order) variables by a term (predicate) in the formula $A$. 
\bi
\item {\em Rules of $\afd$}:
\begin{equation*}
\frac{}{\G,x:A\vdash x:A\;\;(Var)}\hspace{1.3cm}
\frac{\Gamma,x:A\vdash r:B}{\Gamma\vdash \lambda xr:A\rightarrow B}\;\;(\rightarrow\! I) \hspace{1.3cm}
\frac{\Gamma\vdash r:A\rightarrow B\;\;\;\Gamma\vdash s:A}{\Gamma\vdash rs:B}\;\;(\rightarrow E)
\eeqs
\beqs
\frac{\Gamma\vdash t:A\;\;\;\;x\notin FV(\G)}{\Gamma\vdash t:\forall xA}\;(\forall I)\hspace{1.3cm}
\frac{\Gamma\vdash t:\forall xA}{\Gamma\vdash t:A[x:=r]}\;(\forall E)
\eeqs
\beqs
\frac{\Gamma\vdash t:A\;\;\;\;X\notin FV(\G)}{\Gamma\vdash t:\forall XA}\;(\forall^2 I)\hspace{1.3cm}
\frac{\Gamma\vdash t:\forall XA}{\Gamma\vdash t:A[X:=\Pe]}\;(\forall^2 E)
\eeqs

\beqs
\frac{\G\vdash_\Eb t:A[x:=r]\;\;\;\;\;\Eb\rhd r=s}{\G\vdash_\Eb
  t:A[x:=s]}\;(Eq)
\eeqs

Here $\Eb\rhd r=s$ means a derivation of $r=s$ from the set of equations $\Eb$
according to the following rules: 
\begin{itemize}
\item $\Eb\rhd r=s$, if $r=s$ is a particular case of an equation in $\Eb$.
  That is an equation of the form $r_1[\vec{x}:=\vec{t}\,]=r_2[\vec{x}:=\vec{t}\,]$ or
  $r_2[\vec{x}:=\vec{t}\,]=r_1[\vec{x}:=\vec{t}\,]$, where $r_1=r_2\in\Eb$ and
  $\vec{t}$ are arbitrary terms.
\item $r=s$ was obtained from $\Eb$ by reflexivity, transitivity or
  compatibility with functions, that is, by one of the following rules:
\beqs
\frac{}{\Eb\rhd r=r}\hspace{1.3cm}\frac{\Eb\rhd r=s\;\;\;\;\;\Eb\rhd s=t}{\Eb\rhd
  r=t}\hspace{1.3cm}\frac{\Eb\rhd r_1=s_1\;\;\ldots\;\;\Eb\rhd
  r_n=s_n}{\Eb\rhd f(r_1,\ldots,r_n)=f(s_1,\ldots,s_n)}
\eeqs
\end{itemize}
\item {\em Rules involving (co)inductive definitions}: these rules are specifically
  designed to construct (destruct) elements of an inductive (coinductive)
  predicate and to prove statements of totality of functions. Given two 
  $n$-ary\footnote{We are mostly interested in predicates
    for data types, which means $n=1$. However we present the system for any
    arity for the sake of generality.} predicates $\Pe,\R$, and a vector
  $\vec{g}$ of $n$ function symbols, the following
  notation will be used: $\Pe\inc_{\vec{g}} \R$ 
  is the formula $\fa \vec{x}.\Pe(\vec{x}\,)\to\R(\vec{g}(\vec{x}\,))$, where, in
  general, a vector application of $\vec{f}=_{def}f_1,\ldots,f_n$ to
$\vec{t}=_{def}t_1,\ldots,t_n$, denoted $\vec{f}(\vec{t}\,)$, is defined as $\vec{f}(\vec{t}\,)=_{def}f_1(t_1),\ldots,f_n(t_n)$.
In particular $\Pe\inc\R$ is the formula $\fa
  \vec{x}.\Pe(\vec{x})\to\R(\vec{x})$ or even $\Pe\to\R$, if the predicates have arity $0$. Given a predicate transformer
  $\Phi\eqdef \lb X.\Pe$ and a predicate $\R$, the application of $\Phi$ to $\R$, is defined
  by $\Phi(\R)\eqdef\Pe[X:=\R]$, clearly $\Phi(\R)$ is a
  predicate. \\The following rules are motivated by the last part of proposition
  \ref{psubpsup:prop} and  by proposition \ref{pr:mprin}, for lattices of sets. It is important to observe
  that in each rule we employ $\mu(\Phi)$ or $\nu(\Phi)$ instead of the
  expected $\mu(\Phi^\qm)$  or $\nu(\Phi^\mq)$. This choice will be justified by the semantics.
\bi
\item {\em Inductive construction and coinductive destruction}: for any
  (co)inductive predicate $\mu(\Phi)$ or $\nu(\Phi)$ of arity $n$, we assume a
  fixed set of $n$ function symbols $\vec{c}$ or $\vec{d}$, called the constructors of
  $\mu(\Phi)$ or the destructors of $\nu(\Phi)$. 
\beqs
\frac{\G\vdash r:\Phi(\mu(\Phi))(\,\vec{t}\,)}{\G\vdash\inm r:\mu(\Phi)(\vec{c}(\vec{t}\,)\,)}\;(\mu I)\hspace{1.5cm}
\frac{\G\vdash r:\nu(\Phi)(\vec{t}\,)}{\G\vdash\out r:\Phi(\nu(\Phi))(\vec{d}(\vec{t}\,))}\;(\nu E)
\eeqs
These rules correspond to the last part of proposition
\ref{psubpsup:prop}, but observe that our (co)inductive predicates are
in iso-style, due to the presence of the constructors $\vec{c}$ (destructors 
$\vec{d}$). Moreover, the equi-style can be easily
recovered by using as constructors/destructors the identity
function symbol $id$ while adding $id(x)=x$ to the equational axioms.
\item {\em Primitive recursion}: this rule is modelled after the Mendler
  extended induction
  principle given by proposition \ref{pr:mprin}.
  Here we regard a composition $f\circ c$ as a new function symbol defined by the
  equation $(f\circ c)(x)=f(c(x))$ and a composition of
  tuples $\vec{f}\circ\vec{c}$ as the tuple $f_1\circ
  c_1,\ldots,f_n\circ c_n$.
\beqs
\!\!\!\!\!\!\!\!\frac{\G\vdash s:\fa X\big(X\inc\mu(\Phi)\imp X\inc_{\vec{f}} \K\imp 
                    \Phi(X)\inc_{\vec{f}\circ \vec{c}} \K\big)\;\;\;\;\;\G\vdash
                    r:\mu(\Phi)(\vec{t}\,)}{\G\vdash \recm s\;r:\K(\vec{f}(\vec{t}\,))}\;(\mu E)
\eeqs
\item {\em Primitive corecursion}: the Mendler extended coinduction principle of
  proposition \ref{pr:mprin} inspires the following rule.
 Observe that in both rules (recursion and
  corecursion), we can recover the corresponding exact
  principle of proposition \ref{pr:mprin} by using the equi-style and by regarding $f$ as the identity
  function via the equation $f(x)=x$. 
\beqs
\!\!\!\!\!\frac{\G\vdash s:\fa X\big(\nu(\Phi)\inc X\imp\K\inc_{\vec{f}} X\imp 
                    \K\inc_{\vec{d}\circ \vec{f}} \Phi(X) \big)\;\;\;\;\;\G\vdash
                    r:\K(\vec{t}\,)}{\G\vdash \corecm s\;r:\nu(\Phi)(\vec{f}(\vec{t}\,))}\;(\nu I)
\eeqs
\ei
\item {\em Operational semantics}: To end the definition of our logic, we define the operational semantics of the
proof-term reduction, which is given by the one-step reduction relation $t\to_\beta t'$ defined as the closure of the
  following axioms under all term formers.
\beqs
\ba{rll}
(\lb xr)s & \mapsto_\beta & r[x:=s] \\
\recm s(\inm t)  & \mapsto_\beta &  s(\lb xx)(\recm s)t \\
\out (\corecm s\; t) & \mapsto_\beta & s(\lb xx)(\corecm s)t
\ea
\eeqs
Here and troughout the paper $\recm s$ means $\lb x.\recm s\,x$ and the same
is true for $\corecm s$. 

\item {\em Derived rules}: To simplify the presentation of examples we will
  employ the usual second-order encodings for conjunctions, disjunctions and
  existential formulas, which allow to obtain the following derived rules for judgements and
  operational semantics:
\beqs
\hspace{-.45cm}
\frac{\G\vdash r:A\;\;\;\G\vdash s:B}{\G\vdash\pt{r,s}:A\land B}\;\;(\land I)\hspace{.8cm}
\frac{\G\vdash s:A\land B}{\G\vdash \Lp s:A}\;(\land E_L)\hspace{.8cm}
\frac{\G\vdash s:A\land B}{\G\vdash \Rp s:B}\;(\land E_R)
\eeqs
\beqs
\frac{\G\vdash r:A}{\G\vdash\inl
  r:A\lor B}\;(\lor I_L)\hspace{.35cm}
\frac{\G\vdash r:B}{\G\vdash\inr
  r:A\lor B}\;(\lor I_R)
\eeqs

\beqs
\frac{\G\vdash r:A\lor B\;\;\;\;\G,x:A\vdash
  s:C\;\;\;\;\G,y:B\vdash t:C}{\G\vdash \mathsf{case}(r,x.s,y.t):C}\;\;(\lor E)
\eeqs
\beqs
\hspace{2cm}\displaystyle\frac{\G\vdash t:A[x:=r]}{\G\vdash\pack t:\ex x.A} \hspace{2cm}
\displaystyle\frac{\G\vdash t:\ex x.A\;\;\;\G,u:A\vdash r:B\;\;\;x\notin FV(\G,B)}{\G\vdash \open(t,u.r):B}
\eeqs
\beqs
\ba{rllcrll}
\Lp\pt{r,s} &\mapsto_\beta & r &\;\;\;\;\;\; &\Rp\pt{r,s}& \mapsto_\beta& s \\ 
\case(\inl r,x.s,y.t)& \mapsto_\beta& s[x:=r]&\;\;\;\;\;\; & \case(\inr
r,x.s,y.t)&\mapsto_\beta &t[y:=r] 
\ea
\eeqs
$$\open(\pack t,u.r)\mapsto_\beta\; \;r[u:=t] \hspace{1cm}$$
\ei

The proof-reduction behaves well with respect to the derivation relation, as
ensured by the following

\begin{proposition}[Subject-reduction of $\afdmn$]
  If $\G\vdash_\Eb t:A$ and
  $t\to^\star t'$ then $\G\vdash_\Eb t':A$.
\end{proposition}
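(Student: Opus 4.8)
The plan is to prove subject reduction by the standard two-step route: first establish the substitution lemmas, then prove that a single reduction step preserves typing, and finally extend to the reflexive-transitive closure by induction on the length of the reduction sequence. Since $t \to^\star t'$ is the reflexive-transitive closure of $\to_\beta$, it suffices to handle the one-step case $t \to_\beta t'$, because the general statement then follows immediately by a trivial induction on the number of steps (the reflexive case is vacuous and transitivity composes). So the real content is: if $\G \vdash_\Eb t : A$ and $t \to_\beta t'$, then $\G \vdash_\Eb t' : A$.

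First I would record the auxiliary results that the reduction axioms force us to use. Because $\to_\beta$ is defined as the closure of the axioms under all term formers, the proof of the one-step case proceeds by induction on the structure of the derivation of $\G \vdash_\Eb t : A$ (equivalently, on the position of the redex contracted). The congruence cases---where the redex lies strictly inside a subterm---are routine: one applies the induction hypothesis to the relevant premise and reassembles the same rule. The interesting cases are the head reductions, one for each axiom $\mapsto_\beta$. For the $\beta$-axiom $(\lb x r)s \mapsto_\beta r[x := s]$ I need the usual substitution lemma for proof-terms: if $\G, x : A \vdash_\Eb r : B$ and $\G \vdash_\Eb s : A$ then $\G \vdash_\Eb r[x := s] : B$, which is proved by induction on the derivation of the typing of $r$, together with weakening. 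I will also need the second-order and first-order substitution lemmas stating that typing is stable under $A[X := \Pe]$ and $A[x := r]$, since the $(\fa^2 E)$, $(\fa E)$ and $(Eq)$ rules interleave with the term-former rules in a typical derivation of a redex.

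The genuinely new work is the two head-reduction cases coming from the Mendler operators, namely $\recm s(\inm t) \mapsto_\beta s(\lb x x)(\recm s)t$ and $\out(\corecm s\, t) \mapsto_\beta s(\lb x x)(\corecm s)t$. For the recursion case I would start from a derivation ending in $(\mu E)$ applied to $\recm s\,(\inm t)$; inverting the rules, the argument $\inm t$ must be typed by $(\mu I)$, so I obtain a subterm $t$ of type $\Phi(\mu(\Phi))(\vec{t}\,)$ and a recursor body $s$ of type $\fa X\big(X \inc \mu(\Phi) \imp X \inc_{\vec{f}} \K \imp \Phi(X) \inc_{\vec{f}\circ\vec{c}} \K\big)$. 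The reduct $s(\lb x x)(\recm s)t$ must then be shown to have type $\K(\vec{f}(\vec{c}(\vec{t}\,)))$: I instantiate the universally quantified $X$ in the type of $s$ with $\mu(\Phi)$, feed $\lb x x$ as the witness of $\mu(\Phi) \inc \mu(\Phi)$ (here the identity $\lb x x$ is forced by the equi-style convention and the $(Var)$/$(\to\! I)$ rules), feed $\recm s$ as the witness of $\mu(\Phi) \inc_{\vec{f}} \K$ (this is precisely the type the recursor carries, as a consequence of $(\mu E)$ and the shorthand $\recm s = \lb x.\recm s\,x$), and feed $t$ as the witness of $\Phi(\mu(\Phi)) \inc_{\vec{f}\circ\vec{c}}$-antecedent, so that the resulting type $\K((\vec{f}\circ\vec{c})(\vec{t}\,))$ matches $\K(\vec{f}(\vec{c}(\vec{t}\,)))$ after using the defining equation $(\vec{f}\circ\vec{c})(x) = \vec{f}(\vec{c}(x))$ via the $(Eq)$ rule. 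The corecursion case is entirely dual, using $(\nu I)$ and $(\nu E)$ in place of $(\mu E)$ and $(\mu I)$.

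The main obstacle I expect is the bookkeeping around the $(Eq)$ rule and the vector-composition notation: one must check that the formula manipulations hidden inside $\inc_{\vec{g}}$ and $\vec{f}\circ\vec{c}$ line up exactly, so that the type assigned to the reduct after the series of $(\to\! E)$ and $(\fa^2 E)$ eliminations is literally the formula $\K(\vec{f}(\vec{c}(\vec{t}\,)))$ demanded by $(\mu I)$ followed by $(\mu E)$. This is where an inversion lemma for the derivation relation is needed, and where care is required because $(\fa I)$, $(\fa^2 I)$ and $(Eq)$ do not change the proof-term, so the derivation of a redex need not end syntactically in the term-former rule one naively expects; I would therefore phrase the inversion carefully, permuting these type-only rules to the bottom of the subderivations before reading off the components $s$ and $t$.
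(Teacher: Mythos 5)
Your proposal is sound and follows the standard route; the paper itself gives no actual proof here, only the remark that the argument is ``not trivial since $\afdmn$ is formulated in Curry-style'' and a deferral to the analogous proof in the cited work of Miranda-Perea, so there is no detailed official argument to diverge from. The crux you identify --- that $(\fa I)$, $(\fa^2 I)$ and $(Eq)$ leave the proof-term unchanged, so inversion on the typing of a redex requires permuting these type-only rules before reading off the premises of $(\mu I)/(\mu E)$ and $(\nu I)/(\nu E)$ --- is exactly the Curry-style difficulty the authors allude to, and your treatment of the two Mendler head reductions (instantiating $X$ with $\mu(\Phi)$ resp.\ $\nu(\Phi)$, typing $\lb xx$ and $\recm s$ / $\corecm s$ as the two inclusion witnesses, and discharging $(\vec{f}\circ\vec{c})(\vec{t}\,)=\vec{f}(\vec{c}(\vec{t}\,))$ via $(Eq)$) is the correct accounting.
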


\begin{proof}
The proof is not trivial since $\afdmn$ is formulated in Curry-style and it is
analogous to the one developed in \cite{mp09} for a similar system. 
\end{proof}

\subsection{On (Co)Iteration}
In fixed-point theory, (co)iteration can be easily derived from primitive (co)re\-cur\-sion. 
This is not the case for conventional (co)induction principles in type theory
like the ones developed in \cite{mp09} (see section 4.5 of
\cite{mat98} for a deep discussion on this subject) and therefore (co)iterators must be defined apart
from (co)recursors. For the Mendler-style, (co)iterators correspond to
the (co)induction principles of proposition \ref{pr:mprin}, and are again
superfluous (as noticed also in \cite{mat98}). Let us define $\itm
s\, r  \eqdef  \recm s'\, r$ and $\coitm s\,r\eqdef\corecm s'\,r$, where $s'\eqdef
\lb\_.s$ and $\_$ is a dummy variable. The following rules for
inference and proof-reduction are derivable:
\bi
\item Iteration 
\beqs
\frac{\G\vdash s:\fa X\big(X\inc_{\vec{f}} \K\imp 
                    \Phi\,X\inc_{\vec{f}\circ \vec{c}} \K\big)\;\;\;\;\;\G\vdash
                    r:\mu(\Phi)(\vec{t}\,)}{\G\vdash \itm s\;r:\K(\vec{f}(\vec{t}\,))}\;(\mu E^-)
\eeqs
\item Coiteration
\beqs
\frac{\G\vdash s:\fa X\big(\K\inc_{\vec{f}} X\imp 
                    \K\inc_{\vec{d}\circ \vec{f}} \Phi\,X \big)\;\;\;\;\;\G\vdash
                    r:\K(\vec{t}\,)}{\G\vdash \coitm
                    s\;r:\nu(\Phi)(\vec{f}(\vec{t}\,))}\;(\nu I^-)
\eeqs
\ei

\beqs
\itm s(\inm t)\to s(\itm s)\,t \hspace{2cm}
\out (\coitm s\; t) \to s(\coitm s)\,t
\eeqs
We will use both the (co)iteration and the primitive (co)recursion rules in the examples that we discuss next.

\subsection{Examples}
In this section we develop
some examples of (co)inductive predicates that show the expressivity of our
logic. Due to lack of space a deep discussion about the advantages and
disadvantages of both the iso-style and the equi-style is missing. Instead, we
provide some examples that show some of such (dis)advantages. 
Every program ($\lb$-term) $\ol{f}$ presented here is extracted from a
proof of totality for a function $f$ involving (co)inductive predicates and specified by a set of equations in
the logic. Moreover, the reader can verify that in each case $\ol{f}$ is operationally correct.

\begin{ejem} [Iso-inductive ad-hoc Natural Numbers]
Let $\unp\eqdef \lb x.x=\star$ where $\star$ is a fixed constant, this
comprehension predicate is called unit predicate and represents a type with unique inhabitant $\star$.
We define the predicate of natural numbers as $\N\eqdef\mu(\Phi)$ where
$\Phi\eqdef\lb X.\lb x.\unp(x)\lor X(x)$, taking the successor function $\suc$ as
constructor and $\suc(\star)=0$ as equational axiom. 
Defining $\ol{0}\eqdef \inm(\inl())$,\footnote{Sometimes an equation
  is involved directly in a judgment and we agree to give it the void
  proof-term $()$ as code.} and $\ol{\suc}\eqdef \lb x.\inm(\inr x)$ 
we can show that $\vdash \ol{0}:\N(0)$ and $\vdash \ol{\suc}:\fa
x.\N(x)\to \N(\suc x)$. We call this an ad-hoc definition, for zero is in the image of the
successor and therefore our representation is not compatible with Peano's
axioms. This is an unpleasant feature which can be avoided at some cost (see
example \ref{ej:catnats}). However, operationally, our definition is
adequate. For instance, the sum and factorial are programmed as follows:
\bi
\item Sum: from $\Eb_{\suma}=\{\suma n\,0=n,\;\suma n\,(\suc m)=\suc\,(\suma n\,m)\}$ , we get 
$\vdash_{\Eb_{\suma}} \ol{\suma}:\fa n.\fa x.\N(n)\to\N(x)\to\N(\suma
n\,x)$, where $\ol{\suma}\eqdef\lb n.\itm s$ and $s\eqdef\lb y\lb
  z.\case(z,u.n,v.\ol{\suc}(yv))$.  This program behaves correctly: $\ol{\suma}\;n\,\ol{0}\to^\star n\;$
  and $\;\ol{\suma}\;n\,(\ol{\suc}\,m)\to^\star \ol{\suc}(\ol{\suma}\,n\,m)$.
\item Factorial: using the equations 
$\Eb_{\mathsf{fac}}=\{\mathsf{fac}\;0 = 1,\;\mathsf{fac}\;(\suc n)= (\suc n)* (\mathsf{fac}\,n)\}$, 
we can derive $\vdash_{\Eb_\mathsf{fac}}\ol{\mathsf{fac}}:\fa
x.\N(x)\to\N(\mathsf{fac}\;x)$, where $\ol{\mathsf{fac}}\eqdef \recm s$ and the step term
$s$ is defined as $s\eqdef \lb y\lb z\lb w.\case(w,u.\ol{1},v.\ol{\suc}(yv)\ol{*}(zv))$.
\ei
\end{ejem}
\vspace{5pt}

The reader should convince herself that the naive definition of natural
numbers coming from fixed point theory, given by the predicate transformer
$\Phi\eqdef\lb X.\lb x.x=0\lor X(x)$, does not work. In the equi-inductive approach we
cannot construct any number other than zero, and in the iso-inductive case we
cannot construct the zero. Another possibility is the one taken in
\cite{uus98}, discussed next.
 \begin{ejem}[Equi-inductive Natural Numbers] 
    We define $\N\eqdef \mu(\Phi)$ with the predicate transformer 
   $\Phi=\lb X. \lb x.Z(x)\lor X(p(x))$ where $Z\eqdef\lb x.x=0$ and $p$ is a function
   symbol, whose intended meaning is the predecessor function. We have
 $\vdash\ol{0}:\N(0)$ and $\vdash \ol{p}:\fa x.\N (p(x))\to\N(x)$ where
 $\ol{0}\eqdef \inm(\inl ())\; $ and $\;\ol{p}\eqdef \lb x.\inm(\inr x)$. In this case
 we have the following derivation: $f_0:\fa x.Z(x)\to It(x),\;f_p:\fa x. It(p(x))\to It(x)\vdash \itm s:\fa x.\N(x)\to It(x)$,
 where $s\eqdef\lb x.\lb y.\case(y,u.f_0(u),v.f_p(xv))$ and $It(x)$ is a predicate representing the fact that the image of a given
 function $f$ on $x$ was defined by
 iteration,. If we set $g\eqdef \itm s$ then 
$ g\,\ol{0}\to^\star f_0\,()$ and 
$g\,(\ol{p}\,n)\to^\star f_p\,(g\,n) $. This example shows that our
logic subsumes the Mendler-style programming methodology of \cite{uus98}. However, this approach does not correspond
to the idea of programming with proofs that we pursuit.
 \end{ejem}

Our final version of natural numbers shows the full use of the iso-inductive
style and depends on the disjoint union of predicates $\uplus$ which
is a predicate that can be defined
under the presence of the Parigot's restriction operator
$\restriction$ (see \cite{par92}). This operator can be added to our
logic without a problem and behaves as 
a conjunction where the right formula is an equation without algorithmic
content.\footnote{That is, an equation that is not codified by a proof-term.} Defining
$\Pe\uplus\R\eqdef \lb x.\ex z.(\Pe(z)\!\restriction\! x=\lf
z)\;\lor\;(\R(z)\!\restriction\! x=\rg z)$ we get that $\G\vdash r:\Pe(t)$
implies $\G\vdash\pack(\inl r):(\Pe\uplus\R)(\lf t)$ or $\G\vdash \pack(\inr
r):(\R\uplus\Pe)(\rg t)$. One important advantage of using this predicate
together with our iso-style is that we do not need to deal
directly with existential formulas in definitions, and therefore the following
examples are closer to the data type definition mechanisms of functional programming languages.

 \begin{ejem}[Iso-inductive Natural Numbers]\label{ej:catnats}
  The natural numbers are given now by the inductive definition $\N\eqdef
  \mu(\Phi)$ where $\Phi=\lb X.\unp\uplus X$, and we use a generic
  constructor ${\sf cnat}$, which yields the usual constructors by adopting the equational axioms
  $0={\sf cnat}(\lf\star)$ and $\,\suc x = {\sf cnat}({\sf rg}\,x)$. These
  constructors are implemented by $\ol{0}\eqdef \inm(\pack(\inl ()))$ and $\overline{\suc}\eqdef
 \lb z.\inm(\pack(\inr z))$. Let us present the extracted programs for sum, factorial and predecessor: 
\bi
\item Sum: from $\Eb_{\suma}=\{\suma n\;0 = n,\;\suma n\;(\suc m)=\suc (\suma n\;m)\}$, 
we derive 
$\vdash_{\Eb_{\suma}} \lb n.\itm s:\fa n.\fa x.\N(n)\to\N(x)\to \N(\suma n\;x)$
where
$s\eqdef\lb y.\lb z.\open(z,u.\case(u,v.n,w.\overline{\suc}(yw)))$. Therefore
we get $\ol{\suma}\eqdef\lb n.\itm s$.
\item Factorial: from
$\Eb_{\fac}=\{\fac(0) = 1,\;\fac(\suc(n))= \suc(n)* \fac(n)\}$,
we derive $\vdash_{\Eb_{\fac}} \recm s:\fa x .\N (x)\to \N(\fac x)\;$
where  $s\eqdef \lb y. \lb z.\lb w. \open(w, u.\case(u,
u_1.\ol{1},u_2.\ol{\suc}(y u_2)\ol{\star}(z u_2)))$. Therefore 
$\ol{\fac} \eqdef \recm s$ is a correct program for the factorial.
\item Predecessor: an efficient handling-error predecessor specified by $\Eb_{\pred}
= \{ \mathsf{error}=\lf\,\star,\;\pred\,0 = \mathsf{error},\; \pred\,(\suc\,\allowbreak n) = \rg n\}$, 
is implemented by $\ol{\pred} \eqdef \recm s$, where the step function is
$s \eqdef \lb y.\lb z.\lb w. \open(w,\allowbreak  u.\case(u,u_1.\pack (\inl ())
,u_2.\pack (\inr (yu_2)))$, for we derive
$\vdash\Eb_{\pred} \recm s :$ $ \allowbreak\fa x. \N (x) \to (\unp\uplus \N)(\pred x).$
\ei
 \end{ejem}
\vspace{5pt}

In a similar way to the last example, we can define all usual
inductive data types like finite lists or trees (see \cite{mg11,mp09} for
several related examples). We present next, coinductive predicates
corresponding to the conatural numbers and the lazy data type of streams or
strictly infinite lists. These examples show that we can deal with infinite
objects within a terminating system. It is important to observe that in 
the former case the iso-style is more convenient, and for the latter the equi-style suffices.

The implementation of the predicate for the so-called conatural numbers,
corresponding to the ordinal $\omega+1$, gives us the opportunity to show the use of corecursion to construct
inhabitants of data types with infinite objects, in this case the
ordinal $\omega$. We observe that the implementation of conatural
numbers, as well as the implementations for natural numbers discussed above, do
not correspond to Church numerals, as it happens in $\afd$. In particular the
normal proof-term coding the fact that ${\sf CoNat}(\omega)$ holds does not
involve an ``infinite''  Church
numeral, which would be a non-terminating term, for $\omega$ is specified as a
conatural number that equals its predecessor and will be constructed by means of corecursion.

\vspace{5pt}

\begin{ejem}[Iso-coinductive conatural numbers]
 The conatural numbers are defined by $\mathsf{CoNat}\eqdef\nu(\Phi)$, where
 $\Phi\eqdef \lb X.\lb x.\unp(x)\lor X(x)$, and taking the predecessor function
 {\sf pred} as destructor with implementation $\overline{pred}\eqdef\out$. Let us construct the conatural numbers by means of
 corecursion.
 \begin{itemize}
 \item Zero: let $0$ be a constant, $z$ be a unary function symbol and $\Eb_z=\{{\sf
     pred}(z(x))=\star,\;0=z(\star)\}$. If we define $\bar{0}\eqdef
   \corecm\,s\,()$, where $s\eqdef\lb x\lb y.\lb u.\inl u$, then $\vdash_{\Eb_z}
   \overline{0}:\mathsf{CoNat}(0)$ and 
$\overline{{\sf pred}}\,\overline{0}\to^\star\inl ()$.
\item Succesor: let $\suc$ be a unary function and $\Eb_{\suc}=\{{\sf
  pred}(\suc x)=x\}$. We have $\vdash \overline{\suc}:\fa x.{\sf
    CoNat}(x)\!\to{\sf CoNat}(\suc(x))$, where
  $\overline{\suc}\eqdef\corecm\,s$ and $s\eqdef \lb x\lb y.\lb
  z.\inr(xz)$. Moreover, the operational semantics yields $\overline{\pred}(\overline{\suc}\;n)\to^\star\inr\;n$.
\item Omega: to define the infinite ordinal $\omega$, we use a unary function
  $\omega^\dag$ and axioms
  $\Eb_{\omega^\dag}=\{\omega=\omega^\dag(\star),\;{\sf
    pred}(\omega^\dag(x))=\omega^\dag(x)\}$. Then we get $\vdash
  \overline{\omega^\dag}:\fa x.\unp(x)\to{\sf CoNat}(\omega^\dag(x))$. By
  defining $\overline{\omega}\eqdef\overline{\omega^\dag}\,()$ we get
  $\vdash\overline{\omega}:{\sf CoNat}(\omega)$. The needed proof-term is
  given by $\overline{\omega^\dag}\eqdef\corecm\,s$, where $s\eqdef \lb x\lb
  y\lb z.\inr (yz)$. 
 \end{itemize}
\end{ejem}

Our last example of a coinductive predicate corresponds to streams or  strictly infinite lists.

\begin{ejem}[Equi-coinductive Streams]
  The streams over a data type $\tA$ are defined as
  $\St_\tA\eqdef\nu(\Phi)$ where $\Phi\eqdef\lb X.\lb
  x.\tA(\head(x))\land X(\tail(x))$, and the destructor $d$ is the identity function.
  The programs for the usual destructors are $\overline{\head}\eqdef \lb
   x.\Lp(\out x)$ and $\overline{\tail}\eqdef \lb x.$ $\Rp(\out x)$, extracted from 
$\vdash\overline{\head}:\fa x.\St_\tA(x)\to \tA(\head\,x)$
and $\vdash\overline{\tail}:\fa x.\St_\tA(x)\to \St_\tA(\tail\,x)$.
We present now some programs involving streams:
\bi
\item The function $\from$, that generates the stream of natural numbers from a given one,
is specified by $\Eb_{\from}=\{\head(\from x)=x,\;\tail(\from x)=\from(\suc\, x)\}$. 
The reader can verify that  $\vdash_{\Eb_{\from}}\overline{\from}:\fa x.\N(x)\to \St_\N(\from x)$
where $\overline{\from}\eqdef \coitm s$ and $s\eqdef \lb y\lb
z.\pt{z,y(\overline{\suc}\,z)}$, and that $\ol{\head}(\ol{\from}\,x)\to^\star
x$ and $\ol{\tail}(\ol{\from}\,x)\to^\star\ol{\from}(\ol{\suc}\, x)$. 
\item The constructor $\cons$ is defined by $\Eb_{\cons}=\{ \head (\cons \,x
  \,y) = x,\,\tail(\cons\, x\, y)\! = y\}$ and requires corecursion to
  be implemented. We get a program $\ol{\cons}$ from the proof $\vdash_{\Eb_{\cons}}
  \overline{\cons} : \fa x\fa y. \tA(x)\to \St_\tA(y) \to \St_\tA(\cons\, x \,y)$ 
where $\overline{\cons}\eqdef \lb x.\corecm s$ and $s\eqdef \lb f_1\lb f_2\lb w.\pt{x, f_1 w}$.
\item The function $\map$ on streams is specified by
  $\Eb_{\map}=\{\head(\map \,f \,\ell)=f(\head\ell),\,\tail(\map\,f\ell)\allowbreak=\map\,f\,(\tail\ell) \}$. 
An extracted program from $\vdash_{\Eb_{\map}}\ol{\map} :(\fa x.A(x)\to B(f(x)))\to \fa
  z.\St_A(z)\to\St_B(\map f z)$ is $\ol{\map}\eqdef\lb f. \coitm s $, 
where $s\eqdef \lb y\lb z.\pt{f\,(\ol{\head}\,x) ,y(\ol{\tail}\,x)}$.
\item A function similar to $\map$ but that requires corecursion in the
  implementation is $\maph$, which applies a given function only
  to the head of a stream. It is defined by $\Eb_{\maph} = \{\head(\maph \,f
  \,\ell) = f(\head\, \ell),\;\tail(\maph\,f\,\ell)=\tail\,\ell \}$.
We get the program $\vdash_{\Eb_{\maph}}\ol{\maph}: (\fa x.A(x) \to A(f(x)))\to
\fa z.\St_A(z) \to\St_A(\maph\,f\,z) $ where $\ol{\maph} \eqdef \lb f.\corecm s$
and the step function $\,s $ is defined by $\lb y.\lb z.\lb w.\pt{f\,(\ol{\head}\,x), \ol{\tail}\,x}$.
\ei
\end{ejem}
\vspace{5pt}

We finish the section with a couple of examples involving binary predicates.

\begin{ejem}[Iso-inductive order in natural numbers]
The following recursive definition of order for natural numbers: 
\beqs
\frac{\N(n)}{0 < \suc\, n} \hspace{3cm} \frac{ n < m}{\suc\, n < \suc\,m}
\eeqs
is implemented by the iso-inductive definition $\mathsf{L} = \mu (\Phi)$ where the predicate transformer is
$\Phi \eqdef \lb X^{(2)}.\allowbreak\lb x,y.\,(x= 0\land\N(y)) \lor \exists z. X(z,y)
\!\restriction\!(x= \suc z)$, and the constructors are the identity and the successor
functions $\vec{c}\eqdef\id,\suc$.
The derivations $\;\vdash \lb n.\inm(\inl \pt{(),\allowbreak n}):\fa n. \N(n)\to \mathsf{L}(0,\suc\, n)$ and
$\;\vdash \lb w.\inm(\inr (\pack w)):\fa n \fa m. \mathsf{L}(n,m) \to \mathsf{L}(\suc\,n,\suc\,m)$
can be easily verified.
\end{ejem}

\begin{ejem}[Equi-coinductive observational equality for streams]
Leibniz equality is not always adequate for reasoning about streams
(see \cite{raf93}), in some
cases it is better to employ the observational equality. 
This equality relation is defined by the equi-coinductive binary predicate
$\E\eqdef\nu(\Phi)$ where $\Phi \eqdef \lb X^{(2)}. \lb x,y. \head x=\head y
\land X(\tail x,\tail y)$. It is immediate to verify that 
$\vdash\lb x. \Lp(\out x) :\fa x\fa y. \E(x,y) \to \head\, x = \head \,y\;$
and $\vdash\lb x. \Rp(\out x) :\fa x.\fa y. \E(x,y) \to \E(\tail\,x,\tail
\,y)$. Moreover, the corecursion rule yields $\vdash e:
\fa x\fa y.\head x=\head y\to \E(\tail\,x,\tail\,y)\to\E(x,y)$, where the
proof term $e$ is given by $e\eqdef\lb x\lb
y.\corecm\,s\,\pt{x,y}$ and $s\eqdef \lb w.\lb u.\allowbreak\lb v.\pt{\Lp
  v,w(\Rp v)}$. These proofs imply that two streams are observationally equal if and only if their heads
are equal and their tails are again observationally equal.
\end{ejem}

\section{Saturated Sets}\label{sec:sat}
We develop here all constructions on a complete lattice of so-called
saturated sets needed to define the semantics of the logic.
It is important to emphasize that in this section a term is exclusively a
$\lb$-term belonging to the set $\Lb=\{t\;|\;t\;\mbox{is a proof-term of}\;\afdmn\}$. 
\begin{defn} A term $t$ is called an $I$-term if it was generated by an
  introduction rule, i.e., $I$-terms
are terms of the following shapes:
$\lb xr,\;\inm r,\;\allowbreak\corecm\,s\,r$.
Analogously $E$-terms are terms generated by an elimination rule, i.e. they are terms of the following shapes:
$rs,\;\allowbreak\out r,\;\recm\,s\,r$ .
\end{defn}
\noindent Observe that any term is either a variable, an $I$-term or an $E$-term.\\

Instead of reasoning with infinite reduction sequences we will work with an
inductive definition of a set $\sn$ including all strongly normalizing
terms. We discuss its definition now.

\begin{defn}Evaluation contexts are defined by the following grammar:
\beqs
\ba{rll}
\El{\bullet} & ::= &
\bullet\;|\;\El{\bullet}s\;|\;
\out\El{\bullet}\;|\;\recm s\,\El{\bullet}
\ea
\eeqs
\end{defn}
\noindent Let us observe that an evaluation context may be considered as an
$E$-term with a unique placeholder $\bullet$. Therefore, evaluation
contexts are sometimes called elimination contexts or multiple eliminations. 
In the following,
we will write $\El{r}$ for the $E$-term obtained by substituting the
placeholder $\bullet$ by the term $r$ in $\El{\bullet}$. That is
$\El{r}\eqdef \El{\bullet}[\bullet:=r]$ where the substitution is
defined as if $\bullet$ were a term variable. A term of the form
$\El{x}$ is called a {\em neutral term}. The notion of {\em weak head
  reduction}, denoted $\to_{whd}$, needed to define the set $\sn$ is defined
as follows:
\beqs
\frac{t\to_\beta t'}{\El{t}\to_{whd}\El{t'}}
\eeqs

The final concept involved in the inductive definition of the set $\sn$ is the
set $\ist(t)$ of immediate subterms of a given term $t$, defined as follows:
$\ist(x)=\vacio,\;\ist(\lb xr)=\ist(\inm r)=\ist(\out r)=\{r\},\;
\ist(rs)=\ist(\recm\,s\,r)=\ist(\corecm\,s\,r)=\{s,r\}$. We will also need the
set $\ist(E[\bullet])$ of immediate subterms of a given evaluation context
which is defined as if $E[\bullet]$ were a term.

\begin{defn}
The set $\sn$ is defined by means of the
following inductive definition:
\beqs
\ba{ccc}
\displaystyle\frac{}{x\in\sn}\;\;(\mbox{{\sc sn-var}}) & \;\;\;\;\;\;\;&
\displaystyle\frac{t\;\text{is an $I$-term}\;\;\;\;\;\;{\sf ist}(t)\inc\sn}{t\in\sn}\;\;(\mbox{{\sc sn-i}})
\\ \\
\displaystyle\frac{\El{x}\in\sn\;\;\;\;\;{\sf ist}(\El{\bullet})\inc\sn}{E'\big[\El{x}]\in\sn}\;\;(\mbox{{\sc sn-e}})  &
\;\;\;\;\;\;\;&
\displaystyle\frac{\El{t'}\in\sn \;\;\;\;\;\El{t}\to_{whd}\El{t'}\;\;\;\;\;\sist(t)\inc\sn}{\El{t}\in\sn}\;\;(\mbox{{\sc sn-w}})
\ea
\eeqs
where for a redex $t$, $\sist(t)$ is the set of problematic subterms of
$t$, which are the terms that might break the strong normalization
of $t$, even knowing that its reduct $t'$ strongly normalizes.  This set is defined as follows:
$\sist((\lb x.r)s)=\{s\},\;\sist(\recm s\;(\inm r))=\sist(\out (\corecm s\;r))=\vacio$.
\end{defn}
It can be proved that the characterization $\sn$ of the set of strongly
normalizing terms is sound, that is: if $\,t\in\sn\,$ then there is no infinite
reduction sequence $t\to t_1\to t_2\to\ldots\,$. \\
Now we can define a concept of saturated set, modelled after the definition of $\sn$.

\begin{defn}[$\sat$-set]
A set of terms $\M$ is saturated if and only if it consists only of terms in
$\sn$, 
it contains all neutral terms of $\sn$, and it is closed under
weak head expansion of $\sn$ terms. This can elegantly be 
defined by the following rules:

\beqs
\frac{t\in\M}{t\in\sn}\;\;(\mbox{{\sc sat-sn}})\hspace{1.5cm}\frac{E[x]\in\sn}{E[x]\in\M}\;\;(\mbox{{\sc sat-n}})
\eeqs
\vspace{5pt}
\beqs
\frac{E[t']\in\M\;\;\;\;\;E[t]\to_{whd} E[t']\;\;\;\;\;\sist{(t)} \inc \sn}{E[t]\in\M}\;\;(\mbox{{\sc sat-w}})
\eeqs
\end{defn}

It is easy to see that $\sat\eqdef\{\M\;|\;\M\;\text{is saturated}\}$ is
closed under intersection. Therefore the triple $\pt{\sat,\inc,\bigcap}$ forms
a complete lattice. The next concept will be fundamental for reasoning with saturated sets.
\begin{defn}Given a set of terms $M$, the set
  $\cl{M}:=\bigcap\{\Nc\in\sat\;|\;M\cap\sn\inc\Nc\}$ is called the saturated
  closure or $\sat$-closure of $M$.
\end{defn}
$\cl{M}$ is the least saturated superset of $M\cap\sn$.
Observe that $M\inc\cl{M}$ if and only if $M\inc\sn$.

\subsection{Saturated sets for the implication}
The following construction is standard, we recall it here for the sake of self-containtment.

\begin{defn}
We define $\M\Imp\Nc = \cl{\{r\in\Lb\;|\;\forall s\in\M.\;rs\in\Nc\}}$, so
that $\Imp: \sat\times\sat\to\sat$ is a binary operation on saturated sets.
\end{defn}

\begin{proposition}[Soundness]\label{pr:satcon}
Let $\M,\Nc\in\sat$.
\be
\item If $\mathsf{S}_x(\M,\Nc) = \{t\;|\;\forall s\in\M.\;t[x:=s]\in\Nc\}$ and
  $t\in \mathsf{S}_x(\M,\Nc)$ then $\lb xt\in \M\Imp\Nc$.
\item If $r\in\M\Imp\Nc$ and $s\in\M$ then $rs\in\Nc$.
\ee
\end{proposition}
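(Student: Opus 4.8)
The plan is to treat the two items separately, in each case reducing to the minimality of the $\sat$-closure $\cl{\,\cdot\,}$. Throughout, write $S:=\{r\in\Lb\mid\fa s'\in\M.\;rs'\in\Nc\}$ for the generating set, so that by definition $\M\Imp\Nc=\cl{S}$ is the least saturated superset of $S\cap\sn$.

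For item (2), I would fix $s\in\M$ and introduce $T_s=\{r\in\Lb\mid rs\in\Nc\}$. The inclusion $S\inc T_s$ is immediate since $s\in\M$, so it suffices to show that $T_s$ is saturated: then $S\cap\sn\inc T_s$ forces $\cl{S}\inc T_s$, which is exactly the claim. To verify the three defining rules I would use repeatedly that appending $s$ turns an evaluation context $\El{\bullet}$ into the evaluation context $\El{\bullet}s$. For {\sc sat-sn}: if $rs\in\Nc\inc\sn$ then $r\in\sn$, because $\sn$ is closed under immediate subterms and $r\in\ist(rs)$. For {\sc sat-n}: if $\El{x}\in\sn$ then $\El{x}s$ is a neutral term and lies in $\sn$ by rule {\sc sn-e} (using $s\in\M\inc\sn$), hence in $\Nc$ by {\sc sat-n} for $\Nc$; thus $\El{x}\in T_s$. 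For {\sc sat-w}: a weak head reduction $\El{t}\to_{whd}\El{t'}$ lifts to $\El{t}s\to_{whd}\El{t'}s$ with the same redex, so from $\El{t'}s\in\Nc$ and $\sist(t)\inc\sn$, rule {\sc sat-w} for $\Nc$ gives $\El{t}s\in\Nc$, i.e. $\El{t}\in T_s$.

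For item (1), I would show both $\lb x.t\in\sn$ and $\lb x.t\in S$, whence $\lb x.t\in S\cap\sn\inc\cl{S}=\M\Imp\Nc$. First, every variable belongs to $\M$: a variable is in $\sn$ by {\sc sn-var}, hence in $\M$ by {\sc sat-n} with the trivial context; in particular $\M\neq\vacio$. Picking a fresh variable $y\in\M$, the hypothesis $t\in\mathsf{S}_x(\M,\Nc)$ gives $t[x:=y]\in\Nc\inc\sn$, and since $t[x:=y]$ is merely a renaming of $t$ and strong normalization is invariant under renaming, I obtain $t\in\sn$; rule {\sc sn-i} then yields $\lb x.t\in\sn$. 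Next, for an arbitrary $s\in\M$ the term $(\lb x.t)s$ weak-head reduces to $t[x:=s]\in\Nc$ with $\sist((\lb x.t)s)=\{s\}\inc\sn$, so {\sc sat-w} for $\Nc$ gives $(\lb x.t)s\in\Nc$; as $s$ was arbitrary, $\lb x.t\in S$.

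The routine bookkeeping — closure of evaluation contexts under post-application and the computation of $\sist$ on the relevant redexes — presents no difficulty. The one point requiring care is establishing $t\in\sn$ in item (1): one cannot simply call $t$ a subterm of $t[x:=s]$ for an arbitrary $s\in\M$, since substitution may duplicate or erase occurrences, so the argument must route through a fresh variable and the invariance of $\sn$ under renaming. The other mildly delicate step is the {\sc sn-e} application in item (2), which relies on the fact that filling the hole of the composite context $\El{\bullet}s$ with a variable preserves both neutrality and membership in $\sn$.
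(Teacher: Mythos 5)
Your proof is correct and is precisely the standard argument the paper leaves implicit (its ``proof'' is just ``Straightforward, see \cite{mp09}''): for item (2) you show $T_s=\{r\mid rs\in\Nc\}$ is saturated and contains the generating set, so it contains the closure; for item (1) you put $\lb xt$ directly into the generating set intersected with $\sn$ and use ({\sc sat-w}). One small simplification for item (1): you may take the variable $y$ to be $x$ itself, since $x\in\M$ gives $t=t[x:=x]\in\Nc\inc\sn$ directly, avoiding any appeal to renaming-invariance of $\sn$.
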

\begin{proof}
 Straightforward. See for example \cite{mp09}.
\end{proof}

\subsection{$\sat$ valued functions for coinductive predicates}

The goal of this section is to develop the main technical contribution of our
paper, to construct fixed points of $\sat$-valued functions, which will
be needed later for the semantics of coinductive predicates. For the
case of inductive predicates we point to our extended version
\cite{mg11}. The methodology is based on the
one developed in section 9.4 of \cite{mat98} for inductive types. These
constructions and their soundness properties will play an essential role in
the proof  of the adequacy theorem for $\afdmn$.\\

Let us start by fixing a non-empty set $M$ and by defining for all $n\in\N$,
the set of $\sat$-valued $n$-ary functions $\C_n\eqdef
\{F\;|\;F:M^n\to\C\}$, with $\sat_0=\sat$. The set $\sat_n$ forms a complete lattice 
$\pt{\C_n,\inc,\bigcap}$ with the pointwise inherited definitions $F\inc
G\Iff_{def} \fa \vec{x}\in M^n. F(\vec{x}\,)\inc G(\vec{x}\,)$ and defining for any
$\F\inc\C_n$, the function $\bigcap\F:M^n\to\C$ as $(\bigcap\F)(\vec{x}\,)\eqdef
\bigcap_{F\in\F}F(\vec{x}\,)$. 
Through this section we fix a higher-order function $\Phi:\C_n\to\C_n$, and
tuples of functions $\vec{\df}=\df_1,\ldots,\df_n,\vec{\fsf}=\fsf_1,\ldots,\fsf_n$
with $\df_i,\fsf_i:M\to M$.

Let us begin with the constructions for coinductive predicates. The idea is that given a
coinductive predicate $\nu(\Psi)$, where the interpretation of the predicate
transformer $\Psi$ is the function $\Phi:\sat_n\to\sat_n$, its interpretation will be defined as the greatest
fixed-point $\nu(\Theta^\mq)$ of the lower monotonization of some operator $\Theta:\sat_n\to\sat_n$
associated to the arbitrary function $\Phi$.

\begin{defn} We define $\E_\nu:\C_n\to M^n\to\Pe(\Lb)$ by
  $\E_\nu(F)(\vec{t}\,)\eqdef \{r\in\sn\;|\;\out
  r\in\pg(F)(\vec{\df}(\vec{t}\,))\}$ where $F:M^n\to\C$ and $\vec{t}\in M^n$. 
\end{defn}

\begin{lem}\label{enueqthie}
Let $\thie:\C_n\imp\C_n$ be defined as $\thie(F)(\vec{t}\,) =_{def}
\cl{\E_\nu(F)(\vec{t}\,)}$. Then, for any $F\in\C_n, \;\E_\nu(F)=\thie(F)$
\end{lem}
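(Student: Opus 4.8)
The plan is to prove the equality pointwise, showing $\E_\nu(F)(\vec{t}\,)=\cl{\E_\nu(F)(\vec{t}\,)}$ for every $\vec{t}\in M^n$. Recall that $\cl{N}$ is the least saturated superset of $N\cap\sn$; hence any $N\inc\sn$ satisfies $N=\cl{N}$ precisely when $N$ is itself saturated. Since $\E_\nu(F)(\vec{t}\,)\inc\sn$ holds directly from its definition, the whole statement collapses to a single claim: \emph{for each $\vec{t}$, the set $\M\eqdef\E_\nu(F)(\vec{t}\,)$ is saturated}. The inclusion $\E_\nu(F)(\vec{t}\,)\inc\cl{\E_\nu(F)(\vec{t}\,)}$ is then immediate (as $\E_\nu(F)(\vec{t}\,)\inc\sn$), and saturation gives the reverse inclusion because a saturated set is a saturated superset of itself and so contains the least one.

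To establish saturation I would verify the three defining clauses of a $\sat$-set for $\M$, exploiting throughout that $\pg(F)(\vec{\df}(\vec{t}\,))$ is a genuine saturated set: indeed $\pg:\C_n\to\C_n$, so $\pg(F)\in\C_n$ takes values in $\sat$, and $\vec{\df}(\vec{t}\,)\in M^n$ since each $\df_i:M\to M$. Clause (\textsc{sat-sn}) is immediate, as every element of $\M$ is required to lie in $\sn$ by the definition of $\E_\nu$. The observation that drives the other two clauses is that whenever $E[\bullet]$ is an evaluation context, so is $\out E[\bullet]$; thus wrapping a term in $\out$ turns a neutral term into a neutral term and preserves weak head reduction through the same redex.

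For clause (\textsc{sat-n}), let $E[x]\in\sn$ be neutral. Then $\out(E[x])=(\out E)[x]$ is again neutral, and since $\out$ contributes no new side term it lies in $\sn$ by (\textsc{sn-e}). Being a neutral $\sn$-term, it belongs to the saturated set $\pg(F)(\vec{\df}(\vec{t}\,))$ by that set's own (\textsc{sat-n}) clause, whence $E[x]\in\M$. For clause (\textsc{sat-w}), assume $E[t']\in\M$, $E[t]\to_{whd}E[t']$ and $\sist(t)\inc\sn$. Then $E[t]\in\sn$ by (\textsc{sn-w}); moreover $\out E[t]\to_{whd}\out E[t']$ because $\out E[\bullet]$ is an evaluation context contracting the same redex $t$, and $\out(E[t'])\in\pg(F)(\vec{\df}(\vec{t}\,))$ since $E[t']\in\M$. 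Applying the (\textsc{sat-w}) clause of the saturated set $\pg(F)(\vec{\df}(\vec{t}\,))$, with side condition $\sist(t)\inc\sn$, yields $\out(E[t])\in\pg(F)(\vec{\df}(\vec{t}\,))$, i.e.\ $E[t]\in\M$.

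I expect the main obstacle to be the careful bookkeeping around the $\out$-wrapping. One must confirm that $\out E[\bullet]$ genuinely falls under the grammar of evaluation contexts, and that both the redex witnessing $E[t]\to_{whd}E[t']$ and the side condition $\sist(t)\inc\sn$ transport unchanged to $\out E[t]\to_{whd}\out E[t']$, so that the (\textsc{sat-n}) and (\textsc{sat-w}) clauses of $\pg(F)(\vec{\df}(\vec{t}\,))$ can be invoked verbatim. Once this transfer of structure through $\out$ is in place, the remaining verifications are routine.
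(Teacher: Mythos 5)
Your proposal is correct and follows exactly the route the paper takes: the printed proof reduces the lemma to showing $\E_\nu(F)(\vec{t}\,)\in\sat$ for each $\vec{t}$ and defers the verification to the extended version, while you carry out that same reduction and then check the three saturation clauses by wrapping with $\out$ and using that $\pg(F)(\vec{\df}(\vec{t}\,))$ is saturated. The details you supply (closure of evaluation contexts under $\out$, transport of the weak-head redex and of $\sist(t)\inc\sn$) are precisely the ones needed.
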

\begin{proof}
It suffices to show that for any $\vec{t}\in M^n,\;\E_\nu(F)(t)\in\sat$. See \cite{mg11}. 
\end{proof}

The post-fixed points of $\thie$ are characterized as follows:
\begin{lem}\label{carnupos} $F\inc\thie(F)\Iff\fa\vec{t}\in M^n\fa r\in
  F(\vec{t}\,).\;\out r\in\pg(F)(\vec{\df}(\vec{t}\,))$.
\end{lem}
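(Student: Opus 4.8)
The plan is to reduce the claimed equivalence to a single chain of definitional unfoldings, the only substantive ingredient being the identity $\thie(F)=\E_\nu(F)$ established in Lemma \ref{enueqthie}. Because the right-hand side is already stated elementwise in $\vt$ and $r$, a chain of biconditionals will yield both directions simultaneously, so there is no need to argue the two implications separately.

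First I would unfold the left-hand side using the pointwise order of the lattice $\pt{\C_n,\inc,\bigcap}$: the relation $F\inc\thie(F)$ holds exactly when $F(\vt)\inc\thie(F)(\vt)$ for every $\vt\in M^n$. Then I would apply Lemma \ref{enueqthie} to replace $\thie(F)(\vt)$ by $\E_\nu(F)(\vt)$ and expand the latter by its definition, turning the condition into
\beqs
\fa\vt\in M^n.\;F(\vt)\inc\{r\in\sn\;|\;\out r\in\pg(F)(\vec{\df}(\vt\,))\}.
\eeqs
By the meaning of set inclusion this says that for every $\vt\in M^n$ and every $r\in F(\vt)$ we have both $r\in\sn$ and $\out r\in\pg(F)(\vec{\df}(\vt\,))$.

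The crux --- and the sole place where the hypothesis $F\in\C_n$ is actually used --- is to notice that the conjunct $r\in\sn$ is redundant. Since $F\in\C_n$ we have $F(\vt)\in\sat$ for each $\vt$, and by the rule (\textsc{sat-sn}) every member of a saturated set lies in $\sn$; hence $r\in F(\vt)$ already forces $r\in\sn$. Dropping this conjunct leaves
\beqs
\fa\vt\in M^n\,\fa r\in F(\vt).\;\out r\in\pg(F)(\vec{\df}(\vt\,)),
\eeqs
which is precisely the right-hand side. As every step is reversible, the equivalence follows. I do not expect a genuine obstacle: the argument is bookkeeping once Lemma \ref{enueqthie} is in hand, its role being to let us avoid inspecting the saturated closure in the definition of $\thie$ altogether, while the saturation of each value $F(\vt)$ disposes of the strong-normalization side condition for free.
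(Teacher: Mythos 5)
Your proof is correct and is exactly the ``straightforward'' unfolding the paper intends (the paper gives no details beyond that word): pointwise unfolding of $\inc$, Lemma \ref{enueqthie} to replace $\thie(F)$ by $\E_\nu(F)$, and the observation that $r\in\sn$ is automatic since each $F(\vec{t}\,)$ is saturated. Nothing is missing.
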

\begin{proof}
Straightforward. 
\end{proof}

We would like to obtain a greatest fixed-point of $\thie$, but  as we do not
assume that $\pg$ is monotone, we cannot prove either that $\thie$ is
monotone. 
Therefore we cannot apply the Knaster-Tarski fixed-point
theorem to $\thie$ to obtain a greatest fixed-point of $\thie$, which is what we need to interpret
coinductive predicates. However, we can proceed by using an adequate version
of its lower monotonization (see definition \ref{def:mtz}),
$\E^\subseteq_\nu:\C_n\to M^n\to\Pe(\Lb)$ defined by 
\beqs
\E^\subseteq_\nu(F)(\vec{t}\,)=\bigcap_{F'\in\sat_n}\{\E_\nu(F')(\vec{t}\,)\;|\;F\inc F'\}
\eeqs
It is easy to see that $\E^\subseteq_\nu$ is monotone. Therefore the
operator $\thec:\sat_n\to\sat_n$ given by $\thec(F)(\vec{t}\,)=_{def}\cl{\E^\subseteq_\nu(F)(\vec{t}\,)}$ is
also monotone and the function $\nu(\pif)\in\sat_n$ defined by $\nu(\pif)\eqdef \gfp(\thec)$ exists due to the completeness
of the lattice $\pt{\C_n,\inc,\bigcap}$. 

\begin{proposition}\label{nupos}
  $\nu(\pg)$ is a post-fixed point of $\thie$.
\end{proposition}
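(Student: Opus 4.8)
The plan is to unfold what it means for $\nu(\pg)$ to be a post-fixed point of $\thie$, namely $\nu(\pg)\inc\thie(\nu(\pg))$, and to reduce this to a single comparison between the two operators $\thec$ and $\thie$. The crucial observation is that $\thie$ is \emph{not} known to be monotone, so one cannot reason about $\nu(\pg)$ directly through $\thie$; the construction deliberately routes through the monotone operator $\thec$, whose greatest fixed point $\nu(\pg)=\gfp(\thec)$ exists by Knaster--Tarski. Because $\thec$ is monotone, $\nu(\pg)$ is a genuine fixed point, so $\nu(\pg)=\thec(\nu(\pg))$. Hence it suffices to establish the pointwise inclusion $\thec(F)\inc\thie(F)$ for every $F\in\C_n$, and then read off $\nu(\pg)=\thec(\nu(\pg))\inc\thie(\nu(\pg))$.

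First I would prove the underlying inequality $\E^\subseteq_\nu(F)(\vec{t}\,)\inc\E_\nu(F)(\vec{t}\,)$ for all $F\in\C_n$ and $\vec{t}\in M^n$. This is precisely the $\sat$-lattice instance of the first item of Proposition \ref{psubpsup:prop}, the fact that a lower monotonization lies below the operator it monotonizes ($\psub(M)\sqsubseteq\Phi(M)$). Concretely, $\E^\subseteq_\nu(F)(\vec{t}\,)$ is the intersection $\bigcap\{\E_\nu(F')(\vec{t}\,)\mid F\inc F'\}$; by reflexivity of $\inc$ the index $F'=F$ is one of the members, so the intersection is contained in that particular member $\E_\nu(F)(\vec{t}\,)$.

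Next I would lift this inequality through the saturated closure. Since $\cl{\cdot}$ is monotone (it returns the least saturated superset, and both sets here consist only of $\sn$-terms), applying it to $\E^\subseteq_\nu(F)(\vec{t}\,)\inc\E_\nu(F)(\vec{t}\,)$ gives $\thec(F)(\vec{t}\,)=\cl{\E^\subseteq_\nu(F)(\vec{t}\,)}\inc\cl{\E_\nu(F)(\vec{t}\,)}=\thie(F)(\vec{t}\,)$, the desired comparison $\thec\inc\thie$. By Lemma \ref{enueqthie} the right-hand side is already saturated, so it equals $\E_\nu(F)(\vec{t}\,)$, but this identification is not even needed. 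Instantiating at $F=\nu(\pg)$ and combining with the fixed-point identity $\nu(\pg)=\thec(\nu(\pg))$ yields $\nu(\pg)(\vec{t}\,)\inc\thie(\nu(\pg))(\vec{t}\,)$ for every $\vec{t}\in M^n$, which is exactly the claim.

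The argument is short, and the only point demanding care is conceptual rather than computational: one must resist comparing $\nu(\pg)$ with $\thie$ directly, since $\thie$ need not be monotone and $\nu(\pg)$ is defined as a fixed point of $\thec$, not of $\thie$. Once the roles of the two operators are kept straight, the post-fixed-point property falls out from the monotonicity of the lower-monotonization comparison together with that of the closure operator. Alternatively, one could conclude via Lemma \ref{carnupos}, checking that every $r\in\nu(\pg)(\vec{t}\,)$ satisfies $\out r\in\pg(\nu(\pg))(\vec{\df}(\vec{t}\,))$; but the operator-level comparison above is cleaner and avoids reasoning about individual proof-terms.
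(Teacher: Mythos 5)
Your proposal is correct and follows essentially the same route as the paper: the paper's proof observes that $\nu(\pg)$ is a post-fixed point of $\thec$ by definition and that $\thec(\nu(\pg))\inc\thie(\nu(\pg))$ is ``straightforward,'' which is exactly the comparison $\E^\subseteq_\nu(F)\inc\E_\nu(F)$ (the intersection contains the member indexed by $F'=F$) lifted through the monotone closure operator that you spell out. Your only cosmetic difference is invoking the full fixed-point identity $\nu(\pg)=\thec(\nu(\pg))$ where the post-fixed-point inclusion already suffices.
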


\begin{proof}
By definition, $\nu(\pg)$ is a post-fixed point of $\thec$, that is
$\nu(\pg)\inc\thec(\nu(\pg))$. Moreover, it is straightforward to show that
$\thec(\nu(\pg))\inc\thie(\nu(\pg))$, which yields $\nu(\pg)\inc\thie(\nu(\pg))$. 
\end{proof}

Next, we define an operator $\thii$ useful to prove the soundness of the
inference rule for Mendler corecursion.

\begin{defn} Given $\pif:\C_n\imp\C_n$ and $F\in\C_n$ we define
$\I_\nu:\C_n\to M^n\to\Pe(\Lb)$ as follows: if $\vec{s}\in M^n$ and $\vec{s}\neq
\vec{\fsf}(\vec{t}\,)$ then $\I_\nu(F)(\vec{s}\,)\eqdef\vacio$, and
\beqs
\ba{rl}
\I_\nu(F)(\vec{\fsf}(\vec{t}\,))\eqdef 
\Big\{\corecm s\;r\; \Big| & 
H\in\C_n,\;\;r\in H(\vec{t}), \\
& \hspace{-1cm}\;s\in\bigcap_{G\in\C_n}\Big((F\preceq
G)\Imp(H\preceq_{\vec{\fsf}} G)\Imp
H\preceq_{\vec{\df}\circ\vec{\fsf}}\pg(G)\Big)\;\Big\} 
\ea
\eeqs
\noindent
where for any $F,G\in\C_n$ and $\vec{g}$ a tuple of functions $g_i:M\to M$ we define the
$\sat$-set $F\preceq_{\vec{g}} G$ as follows:
$F\preceq_{\vec{g}} G\eqdef
\bigcap_{\vec{t}\in M^n}F(\vec{t}\,)\Imp G({\vec{g}}(\vec{t}\,))$,
in particular, $F\preceq G\eqdef \bigcap_{\vec{t}\in M^n}F(\vec{t}\,)\Imp G(\vec{t}\,)$.\\
Finally we define the function $\thii:\C_n\imp\C_n$ as
$\thii(F)(\vec{t}\,)=_{def}\cl{\I_\nu(F)(\vec{t}\,)}.$
\end{defn}

\begin{lem}\label{lm:inusn}
For any $F\in\C_n,\; \I_\nu(F)\inc\thii(F)$. 
\end{lem}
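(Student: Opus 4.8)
The plan is to reduce the statement to the pointwise claim that $\I_\nu(F)(\vt\,)\inc\sn$ for every $\vt\in M^n$, and then to establish that claim using the introduction rule \textsc{sn-i} for strongly normalizing terms. By the remark following the definition of the $\sat$-closure, the inclusion $\I_\nu(F)(\vt\,)\inc\cl{\I_\nu(F)(\vt\,)}=\thii(F)(\vt\,)$ holds if and only if $\I_\nu(F)(\vt\,)\inc\sn$. Since $\I_\nu(F)\inc\thii(F)$ is exactly the conjunction of these pointwise inclusions over all $\vt$, it suffices to prove $\I_\nu(F)(\vt\,)\inc\sn$ for each $\vt$.

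First I would dispose of the trivial case: whenever $\vec{s}\neq\vec{\fsf}(\vt\,)$ we have $\I_\nu(F)(\vec{s}\,)=\vacio\inc\sn$. It then remains to treat the arguments of the form $\vec{\fsf}(\vt\,)$, where a typical element is $\corecm s\;r$ with $r\in H(\vt\,)$ for some $H\in\C_n$ and with $s$ ranging over the indicated intersection.

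The core of the argument is to apply rule \textsc{sn-i}: since $\corecm s\;r$ is an $I$-term and $\ist(\corecm s\;r)=\{s,r\}$, it is enough to check that $s\in\sn$ and $r\in\sn$. For $r$ this is immediate, because $H\in\C_n$ means $H(\vt\,)\in\sat$, and every element of a saturated set lies in $\sn$ by rule \textsc{sat-sn}. For $s$ I would observe that the set $\bigcap_{G\in\C_n}\big((F\preceq G)\Imp(H\preceq_{\vec{\fsf}} G)\Imp H\preceq_{\vec{\df}\circ\vec{\fsf}}\pg(G)\big)$ is itself saturated: each of $F\preceq G$, $H\preceq_{\vec{\fsf}} G$ and $H\preceq_{\vec{\df}\circ\vec{\fsf}}\pg(G)$ is a $\sat$-set, being an intersection of sets of the form $\M\Imp\Nc$ with $\Imp:\sat\times\sat\to\sat$; hence the parenthesised expression is in $\sat$, and $\sat$ is closed under intersection. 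Therefore $s$ belongs to a saturated set and so $s\in\sn$ by \textsc{sat-sn}. Putting these together, $\ist(\corecm s\;r)\inc\sn$, whence $\corecm s\;r\in\sn$ and $\I_\nu(F)(\vec{\fsf}(\vt\,))\inc\sn$.

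I expect no substantial obstacle here; the proof is essentially bookkeeping once the reduction to $\I_\nu(F)\inc\sn$ is in place. The only point that deserves a moment's care is that the indexing family $\C_n$ be non-empty, so that the intersection defining the admissible step terms $s$ is a genuine intersection of saturated sets rather than the whole of $\Lb$. This is harmless, since $\sn$ is itself saturated and so the constant function $\lb\vec{x}.\sn$ belongs to $\C_n$; in fact even the empty intersection would coincide with the top element $\sn$ of the lattice, still giving $s\in\sn$.
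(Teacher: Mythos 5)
Your proof is correct and takes essentially the same route as the paper: the paper's own proof consists exactly of the reduction to showing $\I_\nu(F)(\vec{s}\,)\inc\sn$ for all $\vec{s}\in M^n$ (justified by the remark that $M\inc\cl{M}$ iff $M\inc\sn$), deferring the verification to the extended version, and your completion of that step --- rule ({\sc sn-i}) applied to the $I$-term $\corecm s\,r$, with $r\in\sn$ since $H(\vt\,)\in\sat$ and $s\in\sn$ since it lies in an intersection of $\sat$-sets built from $\Imp$ --- is the expected one. The closing observation about non-emptiness of $\C_n$ is careful but, as you note, harmless either way since the top of the lattice is $\sn$.
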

\begin{proof}
It suffices to show that for any $\vec{s}\in M^n,\;\I_\nu(F)(\vec{s}\,)\inc\sn$. See \cite{mg11}.
\end{proof}

The pre-fixed points of $\thii$ are characterized as follows:

\begin{lem}\label{carnupre} Let $F\in\C_n$. 
\beqs
\ba{rl}
\thii(F)\inc F\Iff &
\fa\, \vec{t}\in M^n.\fa H\in\C_n.\;\fa r\in H(\vec{t}\,). \\
& \;\fa s\in\bigcap_{G\in\C_n}\Big((F\preceq G)\Imp(H\preceq_{\vec{\fsf}} G)\Imp H\preceq_{\vec{\df}\circ\vec{\fsf}}\pg(G)\Big).\corecm s\,r\in F(\vec{\fsf}(\vec{t}\,))
\ea
\eeqs
\end{lem}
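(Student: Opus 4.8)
The plan is to prove Lemma~\ref{carnupre} by unfolding the definition of $\thii$ and reducing the claim to a set-theoretic equivalence about the raw generator $\I_\nu$. Recall that $\thii(F)(\vec{t}\,)=\cl{\I_\nu(F)(\vec{t}\,)}$, and that by Lemma~\ref{lm:inusn} we already know $\I_\nu(F)\inc\thii(F)$, which in particular means $\I_\nu(F)(\vec{t}\,)\inc\sn$ for every $\vec{t}$. The first step is to observe that the inclusion $\thii(F)\inc F$ holds in $\C_n$ exactly when $\thii(F)(\vec{t}\,)\inc F(\vec{t}\,)$ for all $\vec{t}\in M^n$, by the pointwise definition of $\inc$ on $\C_n$.

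For the forward direction, I would assume $\thii(F)\inc F$ and fix $\vec{t},H,r,s$ as on the right-hand side. By the definition of $\I_\nu$, the term $\corecm s\,r$ lies in $\I_\nu(F)(\vec{\fsf}(\vec{t}\,))$, hence in $\cl{\I_\nu(F)(\vec{\fsf}(\vec{t}\,))}=\thii(F)(\vec{\fsf}(\vec{t}\,))$, and the assumption then delivers $\corecm s\,r\in F(\vec{\fsf}(\vec{t}\,))$, which is precisely the desired conclusion. For the backward direction, I would assume the universally quantified statement on the right and show $\thii(F)(\vec{t}\,)\inc F(\vec{t}\,)$ for each $\vec{t}$. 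When $\vec{t}$ is not of the form $\vec{\fsf}(\vec{t}'\,)$ the generator $\I_\nu(F)(\vec{t}\,)$ is empty, so $\thii(F)(\vec{t}\,)=\cl{\vacio}$, and since $\cl{\cdot}$ produces the least saturated superset of $\vacio\cap\sn=\vacio$ this reduces to checking that the target $F(\vec{t}\,)$, being a member of $\sat$, already contains $\cl{\vacio}$; this follows because every saturated set contains all neutral terms of $\sn$, so $\cl{\vacio}$ is the least saturated set and is included in every $\sat$-set. When $\vec{t}=\vec{\fsf}(\vec{t}'\,)$, the right-hand hypothesis says exactly that every generator element $\corecm s\,r$ of $\I_\nu(F)(\vec{\fsf}(\vec{t}'\,))$ lies in $F(\vec{\fsf}(\vec{t}'\,))$, so $\I_\nu(F)(\vec{\fsf}(\vec{t}'\,))\inc F(\vec{\fsf}(\vec{t}'\,))$; since $F(\vec{\fsf}(\vec{t}'\,))\in\sat$ and $\cl{M}$ is the least saturated superset of $M\cap\sn$, monotonicity and minimality of the closure give $\thii(F)(\vec{\fsf}(\vec{t}'\,))=\cl{\I_\nu(F)(\vec{\fsf}(\vec{t}'\,))}\inc F(\vec{\fsf}(\vec{t}'\,))$.

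The one place demanding care is the passage through the $\sat$-closure in the backward direction: the right-hand side only constrains the \emph{generators} $\corecm s\,r$, whereas $\thii(F)$ is their saturated closure, which additionally throws in neutral terms and weak-head expansions. The main obstacle is therefore verifying that no spurious elements are introduced by $\cl{\cdot}$ that could escape $F$. This is handled cleanly by the minimality property of the closure recorded after its definition, namely that $\cl{M}$ is the least saturated superset of $M\cap\sn$: since $F(\vec{\fsf}(\vec{t}'\,))$ is itself saturated and already contains $\I_\nu(F)(\vec{\fsf}(\vec{t}'\,))=\I_\nu(F)(\vec{\fsf}(\vec{t}'\,))\cap\sn$ (the generator lies in $\sn$ by Lemma~\ref{lm:inusn}), it must contain the closure as well. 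Thus the closure contributes nothing beyond what $F$, as a saturated set, is already forced to contain, and the equivalence follows. The remaining manipulations — commuting the quantifier over $\vec{t}$ with the pointwise inclusion and matching the membership condition on $s$ with the intersection of implication $\sat$-sets appearing in the definition of $\I_\nu$ — are routine bookkeeping.
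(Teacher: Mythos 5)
Your proof is correct and is exactly the ``straightforward'' unfolding the paper has in mind (the paper gives no details, stating only that the proof is straightforward): pointwise reduction of the inclusion, the forward direction via $\I_\nu(F)\inc\thii(F)$, and the backward direction via minimality of the $\sat$-closure over the saturated set $F(\vec{\fsf}(\vec{t}\,))$, with the empty case handled by $\cl{\vacio}$ being the least saturated set. No gaps.
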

\begin{proof} 
Straightforward. 
\end{proof}

To show the soundness of Mendler corecursion we will use the following

\begin{proposition}\label{nupre} 
$\nu(\pif)$ is a pre-fixed point of $\thii$. 
\end{proposition}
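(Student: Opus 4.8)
The plan is to exhibit a post-fixed point of the monotone operator $\thec$ that already contains all the relevant corecursive terms, and then invoke the conventional coinduction principle (Proposition \ref{cor:prind}) for $\thec$ together with the identity $\nu(\pif)=\gfp(\thec)$. First I would reduce the statement via Lemma \ref{carnupre}: since $\thii(F)(\vt)=\cl{\I_\nu(F)(\vt)}$ and each $\nu(\pif)(\vt)$ is already saturated, proving $\thii(\nu(\pif))\inc\nu(\pif)$ amounts, by minimality of the $\sat$-closure, to the pointwise inclusion $\I_\nu(\nu(\pif))\inc\nu(\pif)$; by Lemma \ref{lm:inusn} the left-hand side already lies in $\sn$. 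As $\I_\nu(\nu(\pif))(\vec{s}\,)=\vacio$ unless $\vec{s}=\vec{\fsf}(\vt)$, the genuine task is to show that every generator $\corecm s\,r$ of $\I_\nu(\nu(\pif))(\vec{\fsf}(\vt))$ belongs to $\nu(\pif)(\vec{\fsf}(\vt))$.

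To prove membership in the greatest fixed point I would define a candidate $\Jk\in\C_n$ by
\begin{equation*}
\Jk(\vec{u}\,)\eqdef\cl{\nu(\pif)(\vec{u}\,)\cup\I_\nu(\nu(\pif))(\vec{u}\,)},
\end{equation*}
so that both $\nu(\pif)\inc\Jk$ and $\thii(\nu(\pif))\inc\Jk$ hold pointwise. It then suffices to verify that $\Jk$ is a post-fixed point of $\thec$, because coinduction then yields $\Jk\inc\gfp(\thec)=\nu(\pif)$, whence $\thii(\nu(\pif))\inc\Jk\inc\nu(\pif)$, as required. By minimality of the closure, $\Jk\inc\thec(\Jk)$ reduces to two pointwise containments into the saturated set $\thec(\Jk)(\vec{u}\,)=\cl{\E^\subseteq_\nu(\Jk)(\vec{u}\,)}$. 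The first, $\nu(\pif)(\vec{u}\,)\inc\thec(\Jk)(\vec{u}\,)$, is immediate: $\nu(\pif)=\thec(\nu(\pif))$ is a fixed point, $\nu(\pif)\inc\Jk$, and $\thec$ is monotone.

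The hard part will be the second containment, $\I_\nu(\nu(\pif))(\vec{u}\,)\inc\thec(\Jk)(\vec{u}\,)$, which is non-trivial only for $\vec{u}=\vec{\fsf}(\vt)$. I would fix a generator $\corecm s\,r$ with $r\in H(\vt)$ and $s$ in the intersection defining $\I_\nu(\nu(\pif))$, and show directly that $\corecm s\,r\in\E^\subseteq_\nu(\Jk)(\vec{\fsf}(\vt))$, i.e. that $\corecm s\,r\in\E_\nu(F')(\vec{\fsf}(\vt))$ for every $F'\supseteq\Jk$; since $\corecm s\,r\in\sn$ this places it in the closure. By definition this means $\out(\corecm s\,r)\in\pg(F')(\vec{\df}(\vec{\fsf}(\vt)))$, and because $\out(\corecm s\,r)\to_{whd}s(\lb xx)(\corecm s)r$ with $\sist(\out(\corecm s\,r))=\vacio$, the saturation rule {\sc sat-w} for $\pg(F')(\vec{\df}(\vec{\fsf}(\vt)))$ reduces the goal to $s(\lb xx)(\corecm s)r\in\pg(F')(\vec{\df}(\vec{\fsf}(\vt)))$. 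Now I instantiate the intersection bounding $s$ at $G:=F'$; the three successive applications are all justified by Proposition \ref{pr:satcon}. We have $\lb xx\in\nu(\pif)\preceq F'$ because $\nu(\pif)\inc F'$; we have $\corecm s\in H\preceq_{\vec{\fsf}}F'$ because for $a\in H(\vec{w}\,)$ the term $\corecm s\,a$ is itself a generator of $\I_\nu(\nu(\pif))(\vec{\fsf}(\vec{w}\,))\inc\Jk\inc F'$; and feeding $r\in H(\vt)$ finally lands $s(\lb xx)(\corecm s)r$ in $\pg(F')(\vec{\df}(\vec{\fsf}(\vt)))$.

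Two points deserve care and together form the crux. The apparent circularity—placing $\corecm s\,r$ into $F'$ requires placing the nested corecursive calls $\corecm s\,a$ into $F'$ as well—is resolved precisely by building $\I_\nu(\nu(\pif))$ into $\Jk$ from the outset, so that all such calls are available in every $F'\supseteq\Jk$. The non-monotonicity of $\pg$ is absorbed by the universal quantification over $F'\supseteq\Jk$ inherent in the lower monotonization $\E^\subseteq_\nu$: the argument never needs $\pg$ to be monotone, only that it goes through uniformly for each over-approximation $F'$ of $\Jk$. With both containments established, $\Jk$ is a post-fixed point of $\thec$, coinduction gives $\Jk\inc\nu(\pif)$, and the proposition follows.
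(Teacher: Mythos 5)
Your proof is correct and follows essentially the same route as the paper: the paper invokes the extended coinduction principle with $M=\Theta_I(\nu(\Phi))$ and candidate $\nu(\Phi)\cup\Theta_I(\nu(\Phi))$, which is exactly your $\Jk$ with the plain coinduction step unfolded, and the core argument (reducing to the generators $\corecm s\,r$, weak-head expanding $\out(\corecm s\,r)$, instantiating the intersection at an arbitrary saturated $F'\supseteq\Jk$, and closing the apparent circularity by having baked $\I_\nu(\nu(\Phi))$ into the candidate) is identical. If anything, your version is slightly more careful in carrying the arbitrary over-approximation $F'$ through to the end, where the paper's write-up lapses into using the fixed candidate $\Jk'$ in place of the quantified $\Jk''$.
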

\begin{proof} 
We will proceed by extended conventional coinduction, as defined in proposition
\ref{cor:prind}$\,$. \\Let $\Jk\eqdef \nu(\pif)$ and $\Jk'\eqdef \Jk\cup
\thii (\Jk)$. We have to prove that $\thii(\Jk)\inc \thec (\Jk')$ and 
for this, it suffices to show that $\I_\nu(\Jk)(\vec{s}\,)\inc
\E^\subseteq_\nu(\Jk')(\vec{s}\,)$ for all $\vec{s}\in M^n$. \\
If $\vec{s}\neq\vec{\fsf}(\vec{t})$
then $\I_\nu(\Jk)(\vec{s}\,)= \vacio\inc \E^\subseteq_\nu (\Jk')(\vec{s}\,)$. 
For the case $\vec{s}=\vec{\fsf}(\vec{t}\,)$ 
let us take $\corecm s\;r \in \I_\nu (\Jk)(\vec{\fsf}(\vec{t}))$ with $r\in H(\vec{t}),\; H\in \sat_n$ and 
$s\in \bigcap_{G\in\C_n}\big((\Jk\preceq G)\Imp(H\preceq_{\vec{\fsf}} G)\Imp H\preceq_{\vec{\df}\circ\vec{\fsf}}\pg(G)\big)$.
According to the definition of $\E^\subseteq_\nu (\Jk')(\vec{\fsf}(\vec{t}\,))$ we
have to prove that $\corecm s\;r \in \E_\nu (\Jk'')(\vec{\fsf}(\vec{t}\,))$ for any
$\Jk''\in \sat_n$ such that $\Jk'\inc\Jk''$.
Let us observe that $\corecm s\;r \in \sn$, for $\I_\nu (\Jk)(\vec{\fsf}(\vec{t}\,))\inc \sn$. 
Therefore, we only need to verify that $\out (\corecm s\,r)\in\pg (\Jk')(\vec{\df} (\vec{\fsf}(\vec{t}\,)))$.
Since $\pg (\Jk')(\vec{\df} (\vec{\fsf}(\vec{t}\,)))\in\sat$, it suffices to show that
$s(\lb x x)(\corecm s)r\in \pg (\Jk')(\vec{\df}(\vec{\fsf}(\vec{t}\,)))$.\\
We know that $s\in (\Jk\preceq \Jk')\Imp (H\preceq_{\vec{\fsf}} \Jk') \Imp (H \preceq_{\vec{\df} \circ \vec{\fsf}} \pg(\Jk'))$ 
and also that   $\lb x x\in \Jk \preceq \Jk'$, for $\Jk\inc
\Jk'$. Hence, by part 2 of proposition \ref{pr:satcon}$\;$, 
$s(\lb xx)\in H\preceq_{\vec{\fsf}} \Jk' \Imp H \preceq_{\vec{\df} \circ \vec{\fsf}} \pg(\Jk')$.\\
Next, we show that $\corecm s\in H\preceq_{\vec{\fsf}}\Jk'$. By part 1 of
proposition \ref{pr:satcon} we only need to show that for all
$\vec{t}\in M^n,\;\corecm s\;x \in \mathsf{S}_x(H(\vec{t}\,),\Jk'(\vec{\fsf}(\vec{t}\,)))$, which happens
if and only if for all $e\in H(\vec{t}\,),\; (\corecm s\,x)[x:=e] \in \Jk'(\vec{\fsf}(\vec{t}\,))$. 
Therefore we assume $e\in H(\vec{t}\,)$ and need to prove that  $\corecm
s\,e\in\Jk'(\vec{\fsf}(\vec{t}\,))$, but we have $\corecm s\,e\in \I_\nu(\Jk)(\vec{\fsf}(\vec{t}\,))$ and
therefore, by lemma \ref{lm:inusn}$\;$, $\corecm s\,e\in\thii (\Jk)(\vec{\fsf}(\vec{t}\,))$, but
as $\thii(\Jk)(\vec{\fsf}(\vec{t}\,))\inc\Jk'(\vec{\fsf}(\vec{t}\,))$ we have proven that $\corecm s\in H\preceq_\fsf\Jk'$. 
Using again the second part of proposition \ref{pr:satcon}$\;$, we conclude that $s(\lb
xx)(\corecm s)\in H\preceq_{\vec{\df}\circ\vec{\fsf}}\Phi(\Jk')$. Finally $r\in H(\vec{t}\,)$
implies that $s(\lb xx)(\corecm s)r\in\Phi(\Jk')(\vec{\df}(\vec{\fsf}(\vec{t}\,)))$. 
\end{proof}

To finish this section we summarize the soundness properties of the
(co)in\-duc\-ti\-ve constructions on $\sat$-valued functions.

\begin{proposition}[Soundness of the (co)inductive constructions]\label{propsatsetscoind}
 Let $\pif:\C_n\imp\C_n$,
 $\vec{\cf},\vec{\df},\vec{\fsf}$ be tuples of functions
 $\cf_i,\df_i,\fsf_i:M\to M,\;1\leq i\leq n$, and $\vec{t}\in M^n$. Then
\be
\item If $r\in\pif(\mu(\pif))(\vec{t}\,)$ then $\inm r\in\mu(\pif)(\vec{\cf}(\vec{t}\,))$.
\item If $r\in\mu(\pif)(\vec{t}\,),H\in\C_n$ and $s\in
  \bigcap_{G\in\C_n}\Big((G\preceq\mu(\pif))\Imp(G\preceq_{\vec{\fsf}} H)\Imp\pg(G)\preceq_{\vec{\fsf}\circ\vec{\cf}} H\Big)$ then
  $\recm s\,r\in H(\vec{\fsf}(\vec{t}\,))$. 
\item If $r\in\nu(\pif)(\vec{t}\,)$ then $\out r\in\pif(\nu(\pif))(\vec{\df}(\vec{t}\,))$.
\item If $r\in H(\vec{t}\,),H\in\C_n,$ and $s\in
  \bigcap_{G\in\C_n}\Big((\nu(\pif)\preceq G)\Imp(H\preceq_{\vec{\fsf}} G)\Imp H\preceq_{\vec{\df}\circ\vec{\fsf}}\pg(G)\Big)$ then\\
  $\corecm s\allowbreak\,r\in\nu(\pif)(\vec{\fsf}(\vec{t}\,))$. 
\ee
\end{proposition}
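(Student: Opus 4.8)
The plan is to observe that the four items are precisely the semantic counterparts of the logical rules $(\mu I)$, $(\mu E)$, $(\nu E)$ and $(\nu I)$, and that each one can be read off a fixed-point characterization already established. Since the excerpt develops the coinductive machinery in full while deferring the inductive one to \cite{mg11}, I would prove items 3 and 4 directly and obtain items 1 and 2 by the dual constructions.

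For item 3, recall that $\nu(\pif)\eqdef\gfp(\thec)$. Proposition \ref{nupos} states that $\nu(\pif)$ is a post-fixed point of $\thie$, i.e. $\nu(\pif)\inc\thie(\nu(\pif))$; by Lemma \ref{enueqthie} we moreover have $\thie(\nu(\pif))=\E_\nu(\nu(\pif))$. Instantiating the characterization of post-fixed points from Lemma \ref{carnupos} at $F=\nu(\pif)$ then turns this inclusion verbatim into $\fa\vec{t}\fa r\in\nu(\pif)(\vec{t}\,).\;\out r\in\pg(\nu(\pif))(\vec{\df}(\vec{t}\,))$, which is exactly item 3. No work beyond this instantiation is required.

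For item 4, I would invoke Proposition \ref{nupre}, the nontrivial result asserting that $\nu(\pif)$ is a pre-fixed point of $\thii$, i.e. $\thii(\nu(\pif))\inc\nu(\pif)$. Feeding $F=\nu(\pif)$ into the characterization of pre-fixed points supplied by Lemma \ref{carnupre}, this inclusion unfolds directly into the claim: for every $\vec{t}$, every $H\in\C_n$, every $r\in H(\vec{t}\,)$ and every $s\in\bigcap_{G\in\C_n}\big((\nu(\pif)\preceq G)\Imp(H\preceq_{\vec{\fsf}} G)\Imp H\preceq_{\vec{\df}\circ\vec{\fsf}}\pg(G)\big)$ one has $\corecm s\,r\in\nu(\pif)(\vec{\fsf}(\vec{t}\,))$.

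Items 1 and 2 are handled dually with the inductive constructions of \cite{mg11}, where $\mu(\pif)\eqdef\lfp(\psic)$ is built from the upper-monotonized introduction operator $\psic$ exactly as $\nu(\pif)$ was built from the lower-monotonized elimination operator $\thec$. Item 1 then follows, like item 3, almost by definition: $\mu(\pif)$ is a pre-fixed point of the introduction operator $\psii$ (the $\mu$-analog of Proposition \ref{nupos}). Item 2 is the genuinely inductive counterpart of item 4 and requires showing, by conventional extended induction, that $\mu(\pif)$ is a post-fixed point of the elimination operator $\psie$ (the dual of Proposition \ref{nupre}); unfolding its characterization yields the hypothesis $s\in\bigcap_{G\in\C_n}\big((G\preceq\mu(\pif))\Imp(G\preceq_{\vec{\fsf}} H)\Imp\pg(G)\preceq_{\vec{\fsf}\circ\vec{\cf}} H\big)$ and the conclusion $\recm s\,r\in H(\vec{\fsf}(\vec{t}\,))$. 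I expect no serious obstacle inside this proposition, since the one delicate ingredient, the pre-fixed point property of Proposition \ref{nupre} (and, dually, the post-fixed point property behind item 2), is already discharged; the only point demanding care is clerical, namely checking that the quantifier over $G$ and the three combinators $\preceq$, $\preceq_{\vec{\fsf}}$ and $\preceq_{\vec{\df}\circ\vec{\fsf}}$ line up with the hypothesis on $s$, and that the dualization to items 1 and 2 correctly reverses the direction of each $\preceq$.
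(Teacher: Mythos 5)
Your proposal matches the paper's own proof: the paper obtains item 3 from Proposition~\ref{nupos} together with Lemma~\ref{carnupos}, item 4 from Proposition~\ref{nupre} together with Lemma~\ref{carnupre}, and defers items 1 and 2 to the dual inductive constructions in \cite{mg11}, exactly as you do. The only difference is that you spell out slightly more of the dualization for the inductive half, which the paper leaves entirely to the extended version.
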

\begin{proof}
Part 3 is consequence of proposition \ref{nupos} and lemma \ref{carnupos}$\;$. For
part 4 we just use proposition \ref{nupre} and lemma \ref{carnupre}$\;$. For the
inductive cases we refer to \cite{mg11}$\;$.
\end{proof}

We are now ready to define an intuitionistic semantics for our logic.

\section{Semantics for $\afdmn$}\label{sec:sem}

We present here a realizability semantics for $\afdmn$ where an object-term will be
interpreted as an element of a universe set $M$, a formula as a $\sat$-set and
a predicate as a $\sat$-valued function in $\sat_n$.

\begin{defn}
  A model for a second-order language $\mathfrak{L}$ is a pair
  $\Mf=\pt{M,\I}$  where $M$ is a non-empty set
  and $\I$ is an interpretation function for
  $\mathfrak{L}$ such that $\I(f):M^n\to M$, for every $n$-ary function
  symbol $f\in\mathfrak{L}$ and $\I(P):M^n\to\C$, for every $n$-ary predicate symbol $P\in\mathfrak{L}.$
\end{defn}

From now on we fix a model $\Mf=\pt{M,\I}$.

\begin{defn}
A state or variable assignment is a function $\sigma:Var\to
M\cup\C_n$ such that $\sigma(x)\in M$ and $\sigma(X^{(n)})\in\C_n$.
Given $m\in M$ or $G\in\C_n$, the modified assignments
$\sigma[x/m]$ and $\sigma[X/G]$ are defined as usual.
\end{defn}

Next, we recursively define the interpretation of terms, predicates and formulas.
\begin{defn}
Given a variable assignment $\sigma$, we define the interpretation
function $\ism$, such that $\ism(r)\in M,\;\ism(\Pe)\in\sat_n$ and $\ism(A)\in\sat$, as follows:
\bi
\item Term interpretation 
\bi
\item $\I_\sigma(x)=\sigma(x)$
\item $\I_\sigma(f(t_1,\ldots, t_n))=\I(f)(\I_\sigma(t_1),\ldots,\I_\sigma(t_n))$
\ei
\item Predicate interpretation:
\bi
\item Predicate variables: $\I_\sigma(X)=\sigma(X)$
\item Predicate symbols: $\I_\sigma(P)=\I(P)$
\item Comprehension predicates: if $\F\eqdef\lb\vec{x}A$, we define 
$\I_\sigma(\F)=G_\F$ where $G_\F:M^n\to\sat$ is given by
$G_\F(\vec{m}\,)=\I_{\sigma[\vec{x}/\vec{m}\,]}(A)$, for all $\vec{m}\in M^n$.
\item Predicate transformers: if $\Phi\eqdef\lb X.\Pe$, where w.l.o.g., $\Pe\eqdef\lb\vec{x}.A$, we define
$\I_\sigma(\Phi):\sat_n\to\sat_n$ by
$\I_\sigma(\Phi)(F)(\vec{m}\,)=\I_{\sigma[X/F,\;\vec{x}/\vec{m}\,]}(A)$, for all
$\vec{m}\in M^n$. \\ This way,
it can be proved that for any predicate $\R$, we have $\I_\sigma(\Phi(\R))=\I_\sigma(\Phi)(\I_\sigma(\R))$.
\item (Co)inductive predicates: 
\bi
\item $\I_\sigma(\mu (\Phi))=\mu(\I_\sigma(\Phi))$
\item $\I_\sigma(\nu (\Phi))=\nu(\I_\sigma(\Phi))$
\ei
where of course, the operators $\mu$ and $\nu$ on the right-hand side of the equalities
refer to the constructions on $\sat$-valued functions developed in section \ref{sec:sat}$\;$.
\ei
\item Formula interpretation:
\bi
\item $\I_\sigma\big(\Pe(t_1,\ldots,t_n)\big)=\I_\sigma(\Pe)\big(\I_\sigma(t_1),\ldots,\I_\sigma(t_n)\big)$
\item $\I_\sigma(A\to B)=\I_\sigma(A)\Imp\I_\sigma(B)$
\item $\I_\sigma(\fa xA)=\bigcap\{\I_{\sigma[x/m]}(A)\;|\;m\in M\;\}$
\item $\I_\sigma(\fa XA)=\bigcap\{\I_{\sigma[X/G]}(A)\;|\;G\in\C_n\;\}$
\ei
\ei
\end{defn}

\noindent We observe that as equations are a special case of a second-order universal
formula, there is no need to give a specific semantics for them. However we are only
interested in models that satisfy a set of equations in the following sense.

\begin{defn}
Let $\Mf=\pt{M,\I}$ be a model and $\sigma$ be a state. We say
that the interpretation $\I_\sigma$ satisfies the equation $r=s$ if and only if $\I_\sigma(r)=\I_\sigma(s)$. 
Moreover if $\Eb$ is a set of equations, we say that $\I_\sigma$
satisfies $\Eb$ if and only if $\I_\sigma$ satisfies every equation in $\Eb$.
\end{defn}
\vspace{5pt}
Now we can prove the main theorem of this paper.

\begin{theorem}[Adequacy or soundness] 
Let $\Mf=\pt{M,\I}$ be a model such that the interpretation $\I_\sigma$ satisfies
the set of equations $\Eb$. If $\G\vdash_\Eb t:A$, with $\G=\{x_1:A_1,\ldots,x_n:A_n\}$ and for
all $1\leq i\leq n,\;r_i\in\I_\sigma(A_i)$ then $ t[\vec{x}:=\vec{r}\,]\in \I_\sigma(A)$.
\end{theorem}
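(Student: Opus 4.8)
The plan is to proceed by induction on the derivation of $\G\vdash_\Eb t:A$, showing for each inference rule that the realizability claim in its conclusion follows from the induction hypotheses attached to its premises. Before starting I would establish three auxiliary facts. First, two \emph{substitution lemmas}: for any formula $A$, term $r$ and predicate $\Pe$, one has $\I_\sigma(A[x:=r])=\I_{\sigma[x/\I_\sigma(r)]}(A)$ and $\I_\sigma(A[X:=\Pe])=\I_{\sigma[X/\I_\sigma(\Pe)]}(A)$, both proved by a routine induction on $A$ using the clause $\I_\sigma(\Phi(\R))=\I_\sigma(\Phi)(\I_\sigma(\R))$ already recorded in the semantics. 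Second, a \emph{coincidence lemma} stating that $\I_\sigma(A)$ depends only on $\sigma$ restricted to the free variables of $A$; this handles the side conditions $x\notin FV(\G)$ and $X\notin FV(\G)$ of the $(\forall I)$ and $(\forall^2 I)$ rules. Third, an \emph{equational soundness lemma}: if $\I_\sigma$ satisfies $\Eb$ and $\Eb\rhd r=s$, then $\I_\sigma(r)=\I_\sigma(s)$, proved by induction on the derivation of $\Eb\rhd r=s$ through the reflexivity, transitivity and compatibility rules.

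With these in hand the first- and second-order rules are routine. The $(Var)$ case is immediate from the hypothesis $r_i\in\I_\sigma(A_i)$. For $(\to I)$ and $(\to E)$ I would invoke parts 1 and 2 of proposition \ref{pr:satcon} respectively, noting that $(\lb xr)[\vec{x}:=\vec{r}\,]=\lb x(r[\vec{x}:=\vec{r}\,])$ for $x$ fresh and feeding the extended realizer list into the induction hypothesis of the premise. The quantifier rules $(\forall I),(\forall E),(\forall^2 I),(\forall^2 E)$ reduce, via the substitution and coincidence lemmas, to the observations $\I_\sigma(\fa xA)=\bigcap_{m\in M}\I_{\sigma[x/m]}(A)$ and $\I_\sigma(\fa XA)=\bigcap_{G\in\C_n}\I_{\sigma[X/G]}(A)$, so that instantiation is passage to a member of an intersection and generalization is membership in all members. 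The $(Eq)$ rule combines the equational soundness lemma with the first-order substitution lemma: $\I_\sigma(r)=\I_\sigma(s)$ forces $\I_\sigma(A[x:=r])=\I_\sigma(A[x:=s])$.

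The heart of the argument is the four (co)inductive rules, where the work is already packaged in proposition \ref{propsatsetscoind}; the remaining task is to match each syntactic premise with the semantic hypotheses of the corresponding clause. Concretely, writing $H=\I_\sigma(\K)$ and $\fsf_i=\I(f_i),\,\cf_i=\I(c_i),\,\df_i=\I(d_i)$, I would verify, again via the substitution lemmas, that the interpretation of a containment formula $\Pe\inc_{\vec{g}}\R$ is exactly the $\sat$-set $\I_\sigma(\Pe)\preceq_{\vec{g}}\I_\sigma(\R)$, and that $\I_\sigma(\Phi(X))=\I_\sigma(\Phi)(G)$ whenever $\sigma(X)=G$. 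Granting this, $(\mu I)$ and $(\nu E)$ fall to parts 1 and 3 of proposition \ref{propsatsetscoind}, while the interpretations of the $\fa X(\cdots)$ step-term premises of $(\mu E)$ and $(\nu I)$ coincide, member by member of the intersection over $G\in\C_n$, with the sets $\bigcap_{G\in\C_n}(\cdots)$ appearing in parts 2 and 4, so those parts apply directly to $s[\vec{x}:=\vec{r}\,]$ and $r[\vec{x}:=\vec{r}\,]$.

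The main obstacle I anticipate is bookkeeping rather than conceptual: making the substitution lemmas interact correctly with the iso-style constructors and destructors and with the composed tuples $\vec{f}\circ\vec{c}$ and $\vec{d}\circ\vec{f}$, so that each syntactic $\inc_{\vec{g}}$ formula translates on the nose into the semantic $\preceq_{\vec{g}}$ operator expected by proposition \ref{propsatsetscoind}, and threading the state $\sigma$ correctly through the bound predicate variable $X$ in the step-term premises. Once that translation is pinned down, the inductive and coinductive cases are immediate consequences of the already-established soundness of the $\sat$-valued constructions.
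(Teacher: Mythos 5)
Your plan matches the paper's proof: it is an induction on the derivation in which the (co)inductive rules are discharged by the four parts of Proposition \ref{propsatsetscoind} after translating each syntactic $\inc_{\vec{g}}$ premise into the semantic $\preceq_{\vec{g}}$ hypothesis, which is exactly what the paper does in its one worked case, $(\nu I)$, deferring the remaining cases to the extended version. The auxiliary substitution, coincidence and equational-soundness lemmas you list are the same routine facts the paper leaves implicit.
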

\begin{proof}
Induction on $\G\vdash_\Eb t:A$. We discuss the case for the rule $(\nu I)$,
for the remaining rules see \cite{mg11}.
We need to show that $(\corecm s\,r)[\vec{x}:=\vec{r}]\in
\I_\sigma\big(\nu(\Phi)(\vec{f}(\vec{t}\,))\big)$. That is,
$(\corecm s[\vec{x}:=\vec{r}]\, r[\vec{x}:=\vec{r}])\in\nu(\I_\sigma
(\Phi))(\vec{j})$, where $\vec{j}=\I_\sigma(f_1(t_1)),\ldots,\I_\sigma(f_n(t_n))$. The I.H. yields
$s[\vec{x}:=\vec{r}]\in \I_\sigma\Big(\fa X\big(\nu(\Phi)\inc X\imp\K\inc_{\vec{f}} X\imp 
                    \K\inc_{\vec{d}\circ \vec{f}} \Phi(X) \big)\Big)$. 
From this and by defining 
$\Phi'=\I_\sigma(\Phi),\;\vec{\fsf}=\I_\sigma(f_1),\ldots,\I_\sigma(f_n),\;\vec{\df}=\I_\sigma(d_1),\ldots,\I_\sigma(d_n)$ 
and $H=\I_\sigma(\K)$ it is easy to verify that $s[\vec{x}:=\vec{r}]\in  
\bigcap_{G\in\C_n} \Big((\nu(\Phi')\preceq G) \Imp (H \preceq_{\vec{\fsf}} G) \Imp
(H\preceq_{\vec{\df}\circ \vec{\fsf}}\Phi'(G))\Big)$. Moreover we also have
$r[\vec{x}:=\vec{r}]\in H(\vec{l})$, where $\vec{l}=\I_\sigma(t_1),\ldots,\I_\sigma(t_n)$, by I.H. Therefore we can apply part 4
of proposition \ref{propsatsetscoind} to conclude that $\corecm
s[\vec{x}:=\vec{r}]\,r[\vec{x}:=\vec{r}]\in \nu(\Phi'))(\vec{\fsf}(\vec{l}\,))$, 
which is equivalent to $(\corecm s\,r)[\vec{x}:=\vec{r}]\in\I_\sigma(\nu(\Phi)(\vec{f}(\vec{t}\,)))$.
\end{proof}

\subsection{Strong normalization}
The strong normalization property for the logic $\afdmn$ can be proved by adapting the proof of
$\afd$ which embeds this logic into its propositional fragment, system ${\sf F}$
(see \cite{kri93}). However, our semantics of saturated sets allows for an
easy proof of strong normalization which is a direct consequence of 
the adequacy theorem. Let us start by building a model and an
interpretation that satisfies a given set of equations $\Eb$ as required by
the adequacy theorem.

\begin{defn}
  Given a judgement $\Delta=_{def}\G\vdash_\Eb t:A$ we define a model
  $\Mf_\Delta=\pt{M,\I}$ as follows:
  \begin{itemize}
    \item Let $\approx_\Eb$ be the binary relation on terms given by $r=s\Iff_{def}
      \Eb\rhd r=s$. It is easy to prove that $\approx_\Eb$ is an equivalence
      relation. 
  \item The universe of $\Mf_\Delta$ is the set $M={\sf
        Term}_\L/\approx_\Eb$, of the equivalence
      classes $[t]$ of the relation $\approx_\Eb$.
    \item The interpretation function $\I$ is defined as follows:
      \begin{itemize}
      \item $f^\I: M^n\to M,\;
        f^\I([t_1],\ldots,[t_n])=_{def}[f(t_1,\ldots,t_n)]$
      \item $P^\I:M^n\to\sat,\;
        P^\I([t_1],\ldots,[t_n])=_{def}\cl{\{s\in\Lb\;|\;\G\vdash_\Eb s:P(t_1,\ldots,t_n)\}}$
      \end{itemize}
  \end{itemize}
It is easy to see that the interpretation function is well-defined and
therefore $\Mf_\Delta$ is a model.
\end{defn}

The next lemma shows that in $\Mf_\Delta$ term interpretation is given by
a specific substitution.

\begin{lem}\label{lm:isub}
  Let $\sigma$ be a state and $r\in{\sf Term}_\L$  such that $Var(r)=\vec{x}$. If
  $\sigma(x_i)=[s_i]$ then $\I_\sigma(r)=\big[r[\vec{x}:=\vec{s}\,]\big]$.
\end{lem}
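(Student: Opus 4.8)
The plan is to prove the lemma by structural induction on the object term $r$, exploiting that both sides of the claimed equality live in the term model $\Mf_\Delta$, whose universe consists of $\approx_\Eb$-equivalence classes and in which every function symbol is interpreted by $f^\I([t_1],\ldots,[t_n]) = [f(t_1,\ldots,t_n)]$. For the base case, $r$ is a single variable, say $r = x_i$; then $Var(r)=\{x_i\}$ and the definition of term interpretation gives $\I_\sigma(x_i)=\sigma(x_i)=[s_i]$, while $x_i[\vec{x}:=\vec{s}]=s_i$, so both sides equal $[s_i]$ and the case is immediate.

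For the inductive step, $r = f(t_1,\ldots,t_n)$ for some $n$-ary function symbol $f$. Since $Var(t_j)\subseteq Var(r)=\vec{x}$ for each $j$, the hypothesis $\sigma(x_i)=[s_i]$ applies verbatim to the variables of every subterm $t_j$, so the induction hypothesis yields $\I_\sigma(t_j)=[t_j[\vec{x}:=\vec{s}]]$. Unfolding the definition of term interpretation and then the definition of $f^\I$ in $\Mf_\Delta$ gives
\[
\I_\sigma\big(f(t_1,\ldots,t_n)\big)=f^\I\big([t_1[\vec{x}:=\vec{s}]],\ldots,[t_n[\vec{x}:=\vec{s}]]\big)=\big[f(t_1[\vec{x}:=\vec{s}],\ldots,t_n[\vec{x}:=\vec{s}])\big].
\]
Since substitution distributes over function application, $f(t_1[\vec{x}:=\vec{s}],\ldots,t_n[\vec{x}:=\vec{s}])=f(t_1,\ldots,t_n)[\vec{x}:=\vec{s}]=r[\vec{x}:=\vec{s}]$, which closes the induction.

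I do not expect any genuine obstacle here; the argument is a routine structural induction and the only points requiring attention are bookkeeping ones. First, one must note that the induction hypothesis is being applied to subterms whose variable sets may be proper subsets of $\vec{x}$, which is harmless because $\sigma$ assigns a class to every variable and the substitution $[\vec{x}:=\vec{s}]$ touches only the variables actually occurring. Second, the step invoking $f^\I$ tacitly relies on the well-definedness of the interpretation function on equivalence classes (already observed when $\Mf_\Delta$ was defined), which guarantees that the resulting class does not depend on the chosen representatives $s_i$; this is precisely the compatibility of $\approx_\Eb$ with function symbols.
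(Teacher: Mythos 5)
Your proof is correct and follows exactly the route the paper intends: the paper's proof is just ``Induction on $r$,'' and your structural induction with the variable base case and the function-symbol step via the definition of $f^\I$ in $\Mf_\Delta$ is precisely that argument, carried out in detail.
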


\begin{proof}
  Induction on $r$.
\end{proof}

We can now define an interpretation that satisfies a given set of equations $\Eb$.

\begin{lem}\label{lm:msate}
 For any judgement $\Delta=_{def}\G\vdash_\Eb t:A$ there is a state $\sigma$
 of $\Mf_\Delta$ such that the interpretation $\I_\sigma$ satisfies $\Eb$.
\end{lem}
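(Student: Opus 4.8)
The plan is to exhibit the \emph{canonical} state on $\Mf_\Delta$, namely the one that sends each object variable to its own equivalence class, and to check that under this state term interpretation collapses to the quotient map $r\mapsto[r]$. First I would define $\sigma$ by $\sigma(x)\eqdef[x]$ for every first-order variable $x$, and $\sigma(X^{(n)})\eqdef F_0$ for some fixed but arbitrary $F_0\in\C_n$; the value on second-order variables is immaterial, since equations are formulas built only from first-order terms.

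The key step is to compute $\ism(r)$ for an arbitrary term $r$. Writing $Var(r)=\vec{x}$, the choice $\sigma(x_i)=[x_i]$ is exactly the hypothesis of Lemma \ref{lm:isub} with $s_i=x_i$, so that lemma yields $\ism(r)=\big[r[\vec{x}:=\vec{x}\,]\big]=[r]$. Hence, under the canonical state, the interpretation of any term is simply the class of that term in $M={\sf Term}_\L/\approx_\Eb$.

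It then remains to verify that $\ism$ satisfies $\Eb$, i.e. that $\ism(r)=\ism(s)$ for every equation $r=s\in\Eb$. By the computation above this amounts to $[r]=[s]$, that is $r\approx_\Eb s$, which by the definition of $\approx_\Eb$ means $\Eb\rhd r=s$. But each member of $\Eb$ is derivable from $\Eb$ through the very first clause defining $\rhd$ (an equation $r=s$ is trivially a particular case of itself, taking the identity substitution), so $\Eb\rhd r=s$ holds and the equation is satisfied. Applying this to every equation of $\Eb$ gives that $\I_\sigma$ satisfies $\Eb$, as required.

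I do not expect a genuine obstacle: once Lemma \ref{lm:isub} is in place the argument is a routine unwinding of the definitions. The only points deserving a moment of care are that the canonical state must be defined on \emph{all} object variables, so that Lemma \ref{lm:isub} applies uniformly to every term occurring in $\Eb$ regardless of its free variables, and that the base clause of $\rhd$ indeed produces each equation of $\Eb$; both are immediate.
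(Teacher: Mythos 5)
Your proposal is correct and follows essentially the same route as the paper: define the canonical state $\sigma(x)=[x]$, apply Lemma \ref{lm:isub} to reduce term interpretation to the quotient map, and use the base clause of $\rhd$ to get $[r]=[s]$ for each $r=s\in\Eb$. The only deviation is on second-order variables, where the paper picks a specific derivability-based assignment while you take an arbitrary $F_0\in\C_n$; since satisfaction of equations depends only on term interpretation, your choice is equally valid (and sidesteps the well-definedness check the paper's $G$ requires).
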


\begin{proof}
  We define the state $\sigma$ of $\Mf_\Delta$, as follows:
  \begin{itemize}
  \item For any first-order variable $x$, $\sigma(x)=[x]$.
  \item For any second-order variable $X$, $\sigma(X)=G$, where 
\beqs
G([t_1],\ldots,[t_n])=_{def}\cl{\{s\in\Lb\;|\;\G\vdash_\Eb
      s:X(t_1,\ldots,t_n)\}}.
\eeqs
\end{itemize}

It is easy to verify that the state is well-defined. Moreover $\I_\sigma$
satisfies $\Eb$, for if $r=s\in \Eb$ then
$r\approx_\Eb s$ and therefore $[r]=[s]$. But, if $Var(r)=\vec{x}$
and $Var(s)=\vec{y}$, then by definition of $\sigma$ and by lemma
\ref{lm:isub} we have $\I_\sigma(r)=\big[r[\vec{x}:=\vec{x}]\big]=[r]=[s]=\big[s[\vec{y}:=\vec{y}]\big]=\I_\sigma(s)$.
\end{proof}

\vspace{.3cm}

The strong normalization of $\afdmn$ is now easily gained from lemma
\ref{lm:msate} and the adequacy theorem.

\begin{theorem}[Strong normalization of $\afdmn$]
If $\G\vdash_\Eb t:A$ then $t$ is strongly normalizing  
\end{theorem}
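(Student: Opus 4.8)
The plan is to derive strong normalization as a direct corollary of the adequacy theorem, exploiting the tailor-made model $\Mf_\Delta$ together with the defining closure property of saturated sets. First I would fix the judgement $\Delta\eqdef\G\vdash_\Eb t:A$, with $\G=\{x_1:A_1,\ldots,x_n:A_n\}$, and pass to the associated model $\Mf_\Delta$ whose universe is ${\sf Term}_\L/\approx_\Eb$. By Lemma \ref{lm:msate} there is a state $\sigma$ of $\Mf_\Delta$ for which the interpretation $\I_\sigma$ satisfies the equational context $\Eb$; this is precisely the side condition demanded by the adequacy theorem, so the main hypothesis of that theorem is now available for our judgement.

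The key move is to realize each context variable by itself. For every $i$, the variable $x_i$ is neutral, since $x_i=E[x_i]$ for the trivial evaluation context $E[\bullet]=\bullet$, and $x_i\in\sn$ by the rule (\textsc{sn-var}). Because the interpretation of a formula is always an element of $\sat$, the set $\I_\sigma(A_i)$ is saturated, and every saturated set contains all neutral terms lying in $\sn$; hence $x_i\in\I_\sigma(A_i)$. Thus the hypotheses of the adequacy theorem are met with the choice $r_i\eqdef x_i$. Applying the theorem with this choice gives $t[\vec{x}:=\vec{x}]\in\I_\sigma(A)$, that is, $t\in\I_\sigma(A)$, since the identity substitution leaves $t$ unchanged.

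It remains to descend from membership in a saturated set to genuine strong normalization. Since $\I_\sigma(A)\in\sat$, the rule (\textsc{sat-sn}) yields $t\in\sn$, and by the soundness of the inductive characterization $\sn$ (established earlier, stating that no element of $\sn$ admits an infinite reduction sequence) we conclude that $t$ is strongly normalizing. I do not expect a real obstacle here: the heavy machinery — the construction of fixed points of $\sat$-valued functions, their soundness (Proposition \ref{propsatsetscoind}) and the adequacy theorem itself — has already been carried out. The only point requiring care, and the conceptual heart of the argument, is recognizing that saturated sets are engineered precisely so that variables inhabit every $\I_\sigma(A_i)$; this is what legitimizes the identity substitution and collapses the adequacy statement to the bare membership $t\in\I_\sigma(A)$.
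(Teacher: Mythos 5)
Your proposal is correct and follows essentially the same route as the paper's own proof: build $\Mf_\Delta$, invoke Lemma \ref{lm:msate} to satisfy $\Eb$, put each $x_i$ into $\I_\sigma(A_i)$ via saturation, apply adequacy with the identity substitution, and conclude $t\in\sn$. You merely spell out in more detail why variables belong to every $\sat$-set (via the trivial evaluation context and the rules ({\sc sn-var}) and ({\sc sat-n})), which the paper states without elaboration.
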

\begin{proof}
Assume $\Delta$ is the judgement $\G\vdash_\Eb t:A$, with $\G=\{x_1:A_1,\ldots,x_k:A_k\}$. By
lemma \ref{lm:msate} the set of equations $\Eb$ is satisfied by an interpretation
$\I_\sigma$ in the model $\Mf_\Delta$. Moreover, we have $x_i\in\I_\sigma(A_i)$, for
$\I_\sigma(A_i)$ is a $\sat$-set and every $\sat$-set contains all
variables.
Therefore the adequacy theorem yields
$t=t[\vec{x}:=\vec{x}\,]\in\I_\sigma(A)$. Finally, as
$\I_\sigma(A)\inc\sn$, we get $t\in\sn$ which implies that $t$
strongly normalizes.
\end{proof}

\section{Related Work}\label{sec:rw}

Nowadays, there are several lines of research concerning fixed-point logics in computer
science. In relation to our work we can mention for instance \cite{mt03} which
presents a sequent calculus for positive equi-(co)inductive equational definitions and 
which handles conventional (co)iteration only. In this paper the equality relation is primitive
and corresponds to unification with respect to
$\beta\eta$-reduction. Moreover, the
cut-elimination property holds only after restricting
the coinductive rules. Recently \cite{ba11} develops an extension of the
linear logic MALL and a focused proof system for it where the mechanism of
conventional equi-(co)inductive definitions is similar to ours. In this weak
normalizable logic, which only handles (co)iteration, all predicate operators are assumed to be monotone, proofs of functoriality
are given for positive definitions and the treatment of equality originates from
logic programming. Finally we mention the work of \cite{ab07} which is closer
to ours and presents two strongly normalizing propositional logics (type systems) with Mendler-style
positive equi-(co)inductive types whose semantics of so-called guarded
saturated sets makes heavy use of transfinite ordinal recursion, which
obliges to restrict the (co)iteration rules by means of a kind system
that distinguishes between guarded and unguarded types. On the other
hand this feature allows for a definition of a system of sized types
that encompasses primitive (co)recursion and course of value recursion.

\section{Closing remarks}\label{sec:concl}
We have presented the logic $\afdmn$, an extension of the second order logic
$\afd$ with Mendler-style primitive (co)recursion over least and
greatest fixed points of predicate transformers. To our knowledge,
this is the first such extension that includes Mendler-style (co)inductive
predicates while keeping the strong normalization property. Thus, the programs extracted from the 
termination statements of functions are guaranteed to terminate, independently of the syntactical shape 
of the proof and therefore the particular methodologies to show
termination, like the one in \cite{ms94} are not needed. Based on the
concept of monotonization of an operator we have
developed a realizability semantics of $\sat$-sets and $\sat$-valued
functions  for (co)inductive predicates that does not employ the usual
positivity restriction. This was first achieved in \cite{mat98} for
essentially the propositional inductive fragment of our logic. Furthermore, our adequacy theorem does not
require any ordinal recursion in contrast to the work in \cite{par92,raf93}.
The iso-style of our (co)inductive definitions allows to define data types in a similar way
to the definition mechanisms of functional programming by using 
a generic constructor (destructor), a feature that can be easily enhaced 
to use several specific constructors by means of clausular definitions
(see \cite{mp05}), a mechanism which also allows not to use neither
existential nor restricted formulas. By means of examples, we have shown the suitability
of the logic to extract programs from proofs. However, the concept of formal
data type and other semantical foundations of the program extraction method, like the
issue of equality for coinductive data types, as well as the development of more
sophisticated case studies, are work in progress. 

\section*{Acknowledgements}
 We are thankful to the anonymous referees for the helpful comments regarding
 the contents of this paper, in particular for the gentle hint to include the
 conatural numbers as an example. We also gratefully acknowledge Martha Elena Buschbeck Alvarado for
improving the English manuscript.

\end{document}